\newtheorem{theorem}{Theorem}
\newtheorem{definition}{Definition}
\newtheorem{lemma}{Lemma}
\newtheorem{corollary}{Corollary}
\newtheorem{proposition}{Proposition}
\newtheorem{assumption}{Assumption}
\newtheorem{remark}{\bf Remark}
\def\proof{\noindent{\emph{Proof:} }}
\def\E{\mathsf{E}}
\def\phi{\varphi}
\def\SIR{\mathsf{SIR}}
\def\l{\left}
\def\r{\right}
\def\({\left(}
\def\){\right)}
\def\b0{{\mathbf{0}}}
\newcommand{\nn}{\nonumber}
\begin{document}

%%%%%%%%%%%%%
% TITLE
\title{\huge {Mitigating Interference in Content Delivery Networks by Spatial Signal Alignment: The Approach of Shot-Noise Ratio}}

\author{Dongzhu Liu and Kaibin Huang
\thanks{\noindent  D. Liu and K. Huang are with the Department of Electrical and Electronic Engineering, The University of Hong Kong, Pok Fu Lam, Hong Kong (Email: dzliu@eee.hku.hk, haungkb@eee.hku.hk).  The work was supported by Hong Kong Research Grants Council under the Grants 17209917 and 17259416. Part of this work has been presented in IEEE Globecom 2017.
}}
\maketitle

\begin{abstract}
Multimedia content especially videos is expected to dominate data traffic in next-generation mobile networks.  Caching popular content at the network edge, namely content helpers (base stations and access points), has emerged to be a solution for low-latency content delivery.  Compared with the traditional wireless communication, content delivery has a key characteristic that many signals coexisting in the air carry identical popular content.  However, they can interfere with each other at a receiver if their \emph{modulation-and-coding} (MAC) schemes are adapted to individual channels following the classic approach.  To address this issue, we present a novel idea of \emph{content adaptive MAC} (CAMAC) where adapting MAC schemes to content ensures that all signals carry identical content are encoded using an identical MAC scheme to achieve spatial MAC alignment.  Consequently, interference can be harnessed as signals to improve the reliability of wireless delivery.  In the remaining part of the paper, we focus on quantifying the gain that CAMAC can bring to a content-delivery network by using a stochastic-geometry model.  Specifically, content helpers are distributed as a Poisson point process and each of them transmits a file from a content database based on a given popularity distribution.  Given a fixed threshold on the signal-to-interference ratio for successful transmission, it is discovered that the successful content-delivery probability is closely related to the distribution of the ratio of two independent shot noise processes, named a \emph{shot-noise ratio}.  The distribution itself is an open mathematical problem that we tackle in this work.  Using stable-distribution theory and tools from stochastic geometry, the distribution function is derived in closed form.  Extending the result in the context of content-delivery networks with CAMAC yields the content-delivery probability in different closed forms.  In addition, the gain in the probability due to CAMAC is shown to grow with the level of skewness in the content popularity distribution.  
\end{abstract}

%\begin{IEEEkeywords}
%Network edge caching, content delivery, shot-noise process, adaptive modulation, radio access networks.
%\end{IEEEkeywords}

\section{Introduction}
Videos and other types of multimedia data are becoming increasingly dominant in mobile traffic and undergoing exponential growth.  This gives next-generation mobile networks a key mission of supporting low-latency and reliable content delivery.  It is widely agreed that caching popular content at content helpers (e.g. base stations and access points) at the network edge is a promising solution.  In this context, the paper presents a novel algorithm for efficient content delivery, called \emph{content adaptive modulation-and-coding} (CAMAC).  The algorithm aligns the \emph{modulation-and-coding} (MAC) schemes used by content helpers to allow users to retrieve useful content from interference.  Furthermore, we analyze the performance of a content-delivery network adopting CAMAC using a stochastic geometry model.  In the process, an open problem concerning the distribution of the ratio of two shot noise processes is solved.

\subsection{Techniques for Content Caching and Delivery}
Under the constraint that helpers have finite storage, efficient content delivery requires the joint design of the techniques for content caching and delivery.  One approach called coded caching, is to jointly encode content cached at multiple helpers such that the broadcast nature of wireless transmission can be exploited for efficient content delivery by reducing the number of channel usages \cite{maddah2014fundamental, ghorbel2016content}.  Another approach does not involve coding and focus solely on optimizing the content placement at multiple helpers to reduce delivery latency.  Solving the problem of optimal content placement was found to be NP-hard \cite{shanmugam2013femtocaching, li2015distributed}.  Thus, most research based on the current approach aims at designing sub-optimal techniques using diversified tools such as greedy algorithms\cite{shanmugam2013femtocaching} and belief-propagation \cite{li2015distributed}.  A survey of recent advancements in this direction can be found in \cite{paschos2016wireless}.

In this work, we propose a new approach for efficient content delivery from the perspective of adaptive MAC.  In classic communication theory, the MAC scheme is adapted to the channel state for coping with channel fading \cite{goldsmith1997capacity, caire1999capacity}.  In contrast, the proposed CAMAC design is to adapt MAC to the transmitted content file.  The design is motivated by the fact that in a content-delivery system, many coexisting signals in the air carry the same popular content.  Then CAMAC ensures all such signals are encoded using an identical MAC scheme, allowing them to combine at any intended/unintended receivers instead of interfering with each other.  In other words, CAMAC represents a low-cost cooperative scheme for content helpers without message passing between them.  The design effectively coordinates all helpers into a content-broadcast system.

\subsection{Modeling and Designing Content-Delivery Networks}
The design and analysis of large-scale content-delivery networks is currently a highly active area.  The research in the area focuses on designing strategies for caching content at helpers with finite storage so as to optimize the network performance.  Stochastic-geometric network models are commonly adopted in this area where content helpers are typically distributed as \emph{Poisson point process} (PPP).  Correspondingly, the network performance is measured by the product of \emph{hit probability} \cite{blaszczyszyn2015optimal}, defined as the probability that a file requested by a typical user is available at the associated helper and the \emph{content-delivery probability}, defined as the probability that a transmitted file is received by an intended user with a \emph{signal-to-interference ratio} (SIR) or \emph{signal-to-interference-noise ratio} (SINR) exceeding a given threshold \cite{haenggi2009stochastic}.  Based on the network model and performance metrics, caching strategies have been designed for various types of content-delivery networks including device-to-device networks \cite{krishnan2017effect, afshang2016optimal, malak2014optimal,ji2016wireless}, cellular networks \cite{blaszczyszyn2015optimal}, heterogeneous networks \cite{yang2016analysis,bastug2015cache,cui2016analysis, wen2017cache}, and cooperative networks \cite{chen2017cooperative, wen2017random}.  Depending on whether the decision on caching a file at a helper is fixed or randomized caching, the strategies can be separated into deterministic caching \cite{krishnan2017effect, afshang2016optimal, yang2016analysis, bastug2015cache, chen2017cooperative} and random caching \cite{blaszczyszyn2015optimal,malak2014optimal, cui2016analysis,wen2017cache,wen2017random, chen2017probabilistic}.  The common findings of this series of research are that both popular and unpopular content files should be made available in networks but the corresponding helper decisions depend on their popularity as well as the helpers' parameters e.g., storage capacity and transmission power.  

The prior work shares the assumption that the signals transmitted by helpers appear as interference at unassociated users even if they attempt to receive the same content as carried in the interference signals \cite{krishnan2017effect, afshang2016optimal, malak2014optimal, blaszczyszyn2015optimal, yang2016analysis, bastug2015cache, cui2016analysis, wen2017cache}.  The assumption implies that the MAC schemes at non-cooperative helpers are independently adapted, resulting in independently distributed signals regardless of their content.  This justifies the said assumption.  In contrast, the deployment of the proposed CAMAC algorithm in a content-delivery network unifies the MAC schemes adopted by all helpers in the plane that transmit a same content file.  Consequently, all their signals can be combined at any users requesting the file before demodulation and decoding, suppressing interference with respect to the case without MAC alignment.  

\subsection{Signal-and-Interference Distributions in Wireless Networks}\label{sec: first def shot noise ratio}
Due to its tractability and availability of many analytical tools, the theory of PPP has been widely applied in modelling and analyzing wireless networks (see e.g., the survey in \cite{haenggi2009stochastic}).  The network-performance analysis, typically, the analysis of outage probability involves characterizing the distributions of shot noise process representing random signals and interference \cite{haenggi2009stochastic}.  Given a PPP $\Phi$ in the plane, a shot noise process is defined as the following summation over $\Phi$:
\begin{equation}
S\left(\Phi\right)=\l\{
\begin{aligned}
&\sum_{X\in\Phi} h_X \left|X\right|^{-\alpha}, && \text{with fading},\\
&\sum_{X\in\Phi} \left|X\right|^{-\alpha}, && \text{without fading}
\end{aligned}\r.
\label{Eq:Shot:Def}
\end{equation}
where the fading coefficients $\{h_X\}$ are \emph{random variables} (r.v.s) independent of $\Phi$, the path-loss exponent $\alpha$ is a positive constant and $\left|X\right|$ measures the distance between $X$ and the origin.  For networks without cooperative transmissions, the outage probability usually depends on the distribution of a single shot noise process e.g., its Laplace function in the case of Rayleigh fading \cite{andrews2011tractable, baccelli2009stochastic}.  On the other hand, for networks with cooperative transmissions, the outage-probability analysis involves studying the ratio of two shot noise processes, called a \emph{shot-noise ratio}, to model the signal-to-interference ratio \cite{nigam2014coordinated, huang2013analytical, tanbourgi2014tractable, guruacharya2017sinr}.  One relatively simple model of cooperative transmission is to divide the plane into two non-overlapping regions with respect to the typical user: one is the near-field region containing associated transmitters and the other one is far-field region containing interferers \cite{huang2013analytical, tanbourgi2014tractable, guruacharya2017sinr}.  The transmitters and interferers are subsets of a homogeneous PPP.  As a result, the outage-probability analysis in \cite{huang2013analytical, tanbourgi2014tractable, guruacharya2017sinr} reduces to analyzing a shot-noise ratio arising from two \emph{non-homonegeous} PPPs.  Due to its intractability, prior work resorts to asymptotic analysis \cite{huang2013analytical}, approximation \cite{tanbourgi2014tractable}, or presenting  complex expression requiring numerical evaluation  \cite{guruacharya2017sinr}.

In this work, for a content-delivery network adopting CAMAC, it is found that the content-delivery probability depends on the distribution of a shot-noise ratio generated by two \emph{homogeneous} PPPs $\Phi_1$ and $\Phi_2$ with densities $\lambda_1$ and $\lambda_2$, denoted as $R\left(\lambda_1,\lambda_2\right)=\frac{S\left(\Phi_1\right)}{S\left(\Phi_2\right)}$.  The distribution remains unknown in the literature.  We tackle the open problem and derive the distribution of $R\left(\Phi_1, \Phi_2\right)$ by applying the theory of stable distribution as elaborated shortly.

\subsection{Contributions and Organization}

We consider a distributed content-delivery network where content helpers are distributed as a homogeneous PPP in the horizontal plane.  There exists a content database comprising a fixed number of files.  They are requested by a typical user with varying probabilities, forming the \emph{popularity distribution}.  For simplicity, it is assumed that due to limited storage, each helper randomly caches a single file based on the popularity distribution.\footnote{\label{Foot1}{It is straightforward to extend the current analysis  to the case where each helper caches more than one file.  The main effect is that the density of helpers caching content useful for  the typical user is increased by a factor proportional to the total  popularity measure   of the files cached by the helper. This shortens transmission distances between users and associated helpers, thereby increasing the content-delivery probability. In the extreme case where helpers with large storages cache the whole database, each user is served by the nearest helper among all. For the above general cases, the derivation procedure remains unchanged as in the current work.}} In the content-delivery phase, all helpers transmit their cached files to associated users requesting the files.  Our analysis focuses on a typical user at the origin, associated with the nearest helper having the file requested by the user.  We measure the network performance by the content-delivery probability that the typical user receives the desirable file with the received SIR exceeding a given threshold.  

Given the network model, the contributions of the current work are summarized as follows. 

\begin{enumerate}
\item  (CAMAC Algorithm)
The first contribution of the work is the idea of spatial MAC alignment for delivering identical content and its realization by designing the following CAMAC algorithm.  There exists a pre-determined mapping from the files in a given content database to a set of available MAC schemes.  The mapping is known to all helpers and used by them to adapt the MAC scheme to their transmitted files. 
\item (Shot-Noise Ratio)
The second contribution of the work is the tractable analysis of the distribution of a shot-noise ratio and the application of the results to derive the content-delivery probability.  Consider a shot-noise ratio $R(\lambda_1, \lambda_2)$ generated by two independent homogeneous PPPs $\Phi_1$ and $\Phi_2$, with densities $\lambda_1$ and $\lambda_2$ as defined in \eqref{Eq:Shot:Def} of the case without fading.  First we derive the \emph{complementary cumulative distribution function} (CCDF) of $R(\lambda_1, \lambda_2)$ by converting the function to the zero-crossing probability of a \emph{differential shot noise process}, defined as the difference between two shot noise processes, which is proved to be subject to the class of stable distribution.  Then applying the theory of stable distribution leads to the desirable CCDF in a  closed form:
\begin{align}\label{Eq:CCDF:ShotR}
\Pr\left(R\left(\lambda_1,\lambda_2\right)>x\right)= \frac{\alpha}{2\pi}\arctan&\Bigg(\Big(-1+\frac{2}{1+\frac{\lambda_2 }{\lambda_1}x^{\frac{2}{\alpha}}}\Big)\nn\\
&\times \tan\frac{\pi}{\alpha}\Bigg) +\frac{1}{2}. 
\end{align}
Next, by applying Campbell's theorem \cite{martin} and the series form of the shot noise distribution \cite{PowerLawShotNoise}, we derive the Laplace function of the shot-noise ratio in a series form:
\begin{equation}\label{Eq:Lapace:ShotR}
\E\left[e^{-sR(\lambda_1, \lambda_2)}\right] = \sum_{m=1}^{\infty}\frac{(-1)^{m+1}}{\Gamma(1-m\frac{2}{\alpha})}\left(\frac{\lambda_2}{\lambda_1}s^{-\frac{2}{\alpha}}\right)^m.
\end{equation} 

\item (Content-Delivery Probability)
Using transmitted files as random marks, the helper process can be decomposed into a set of homogeneous PPPs, each corresponding to a particular file.  Due to CAMAC, the typical user's data signal combines all incident signals for the PPP marked by the requested file and interference power sums over all PPPs marked by other files.  As a result, the content-delivery probability is a distribution function of a shot-noise ratio representing the received SIR.  Applying the preceding results of shot-noise ratio yields simple forms  for the content-delivery probability.  An approximation of the probability is also derived to yield useful insight relating it with the popularity distribution.
\end{enumerate}

The remainder of the paper is organized as follows. The mathematical model and metric are described in Section~\ref{sec: Mathematical Model}.  The CAMAC design  is proposed in Section~\ref{sec:content adaptive modulation and coding}.  The distribution of shot-noise ratio is provided in Section~\ref{sec: Distribution of a shot-noise ratio} followed by the analysis of  content-delivery probability in Section~\ref{sec: coverage analysis}.  Spatial alignment gain is discussed in Section~\ref{sec: discussion}.  Simulation results are presented in Section~\ref{sec: simulation results} followed by concluding remarks in Section~\ref{sec: Concluding Remarks}. The appendix contains the proofs of propositions and lemmas.

\section{Mathematical Model and Metric}\label{sec: Mathematical Model}

Consider a distributed content-delivery network where content helpers are  modeled as a homogeneous \emph{Poisson point process} (PPP) in the horizontal plane, denoted as $\Phi$ with density $\lambda$.  Let $\mathcal{D}\triangleq\left\{\mathcal{F}_1, \mathcal{F}_2, \cdots , \mathcal{F}_N\right\}$ denote a content database comprising $N$ files and each of them with a uniform size. The popularity distribution for the files are denoted as $\{a_n\}$ with $a_n \in \left[ 0, 1\right] $ corresponding to $\mathcal{F}_n$ and  $\sum_{n=1}^N  a_n =1$. To simplify analysis, it is assumed that the storage of each helper is limited so that it randomly caches a single file based on the popularity distribution, corresponding to the case where helpers are mobiles. Each user randomly generates a request  for a particular file based on the popularity distribution and thus is associated with the nearest helper caching the desirable content. We assume that the network is heavily loaded such that every  helper is  associated with a desired user and transmits the requested file in every time slot. Consider an arbitrary time slot.  Using the transmitted files as marks, the helper process $\Phi$ can be decomposed into $N$ homogeneous PPPs $\Phi_1, \Phi_2, \cdots, \Phi_N$ with $\Phi_n$ corresponds to $\mathcal{F}_n$ and having the density $a_n\lambda$. For analyzing the content-delivery probability, based on  Slivnyak's Theorem \cite{chiu2013stochastic}, it is sufficient to consider a typical user at the origin with the request denoted as $D_0\in \mathcal{D}$. The network spatial distribution is illustrated in Fig.~\ref{Fig: spatial}. 

\begin{figure}[t]
\begin{center}
{\includegraphics[width=9cm]{./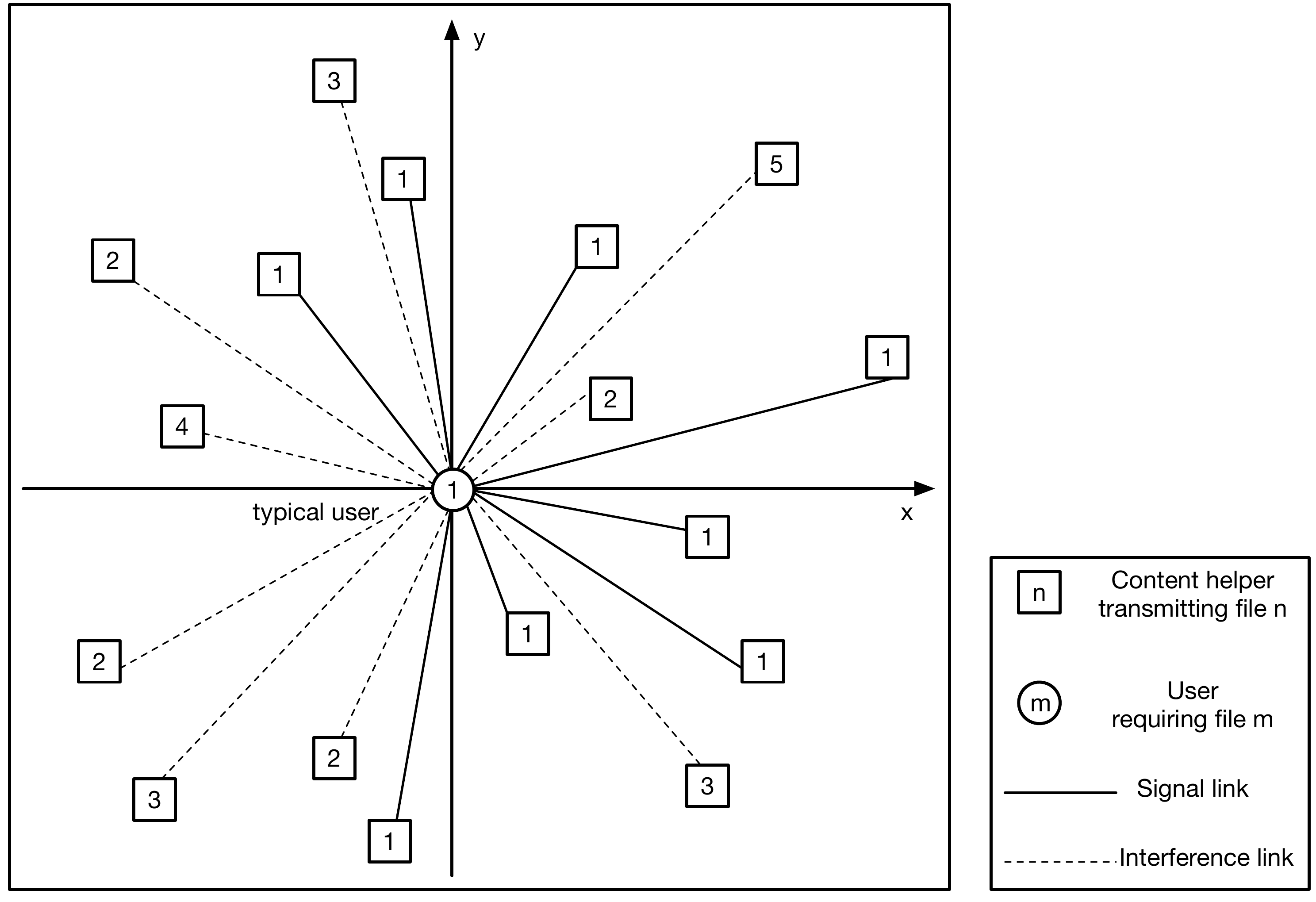}}\\
\caption{Spatial distribution of a content delivery network adapting CAMAC (see Section~\ref{sec:content adaptive modulation and coding}).}
\label{Fig: spatial}
\end{center}
\end{figure}

The channel model is described as follows. All channels are assumed to be narrow-band and characterized by path loss and Rayleigh fading.  Consider an arbitrary time slot. The signal transmitted by a transmitter at the location $X$, denoted as $Q_X$,  is received by the typical receiver as $H_X \left|X\right|^{-\frac{\alpha}{2}} Q_X$, where $\left|X\right|$ measures the Euclidean propagation distance, $\alpha$ denotes the fixed path-loss exponent, and the fading coefficient   $H_X$ is a ${\cal{CN}}(0,1)$ r.v. All fading coefficients are i.i.d.   Consider an interference limited network where noise is negligible. Then  the total received signal at the typical user can be written as: 
\begin{equation}\label{Eq:Signal}
Y_0= \sum_{n=1}^N \sum_{X\in\Phi_n } {H_X} \left| X \right|^{-\frac{\alpha}{2}} Q_X. 
\end{equation}
The signal is decomposed into data signal and interference in the next section based on the  CAMAC algorithm. Let $\SIR$ denotes the resultant SIR for the typical user. Conditioned on $D_0 = \mathcal{F}_k$, the conditional content-delivery probability is denoted as $P_d(\mathcal{F}_k)$ and written as $P_d (\mathcal{F}_k)= \Pr(\SIR > \theta_k\mid D_0 = \mathcal{F}_k)$, with $\theta_k$ is a given threshold specifying the criterion on the received signal  for successful delivery of $\mathcal{F}_k$.   Then the content-delivery probability   $P_d = \sum_{k=1}^N a_k P_d(\mathcal{F}_k)$.

\section{Content Adaptive Modulation and Coding}\label{sec:content adaptive modulation and coding}

\subsection{CAMAC: Algorithm Design} \label{subsec:CAMAC}

The design of the CAMAC algorithm is simple and comprises two components: one is a mapping from files in a content database to a given number of MAC schemes and the other one is the helper architecture for adapting the MAC scheme using the mapping, which are illustrated in Fig.~\ref{Fig: CAMAC}. Consider the content-MAC mapping in Fig.~\ref{Fig: Mapping}. The assignment of a particular MAC scheme to a given file depends on the the acceptable MAC rate given the quality requirement and the size of the content. Typically, the set of available MAC schemes is much smaller than the content database and thus each MAC scheme can be assigned to multiple files.  The optimal design of the mapping depends on specific content characteristics and system parameters, which is outside the scope of the current work.  Next, consider the helper architecture for implementing CAMAC as shown in Fig.~\ref{Fig: Typical Helper}. All helpers agree on an uniform  content-MAC mapping which is determined by a centralized control station prior to the content-delivery phase. During content delivery, each helper uses the mapping to select a MAC scheme for modulating and coding a particular transmitted file. This spatially aligns the MAC schemes of all helpers transmitting identical content without online coordination or message exchange.  As a result, as illustrated in Fig.~\ref{Fig: spatial}, the signals transmitted by all helpers for sending identical files are combined at any receiver requesting the file as the input to the  demodulator and decoder, increasing signal power and reducing interference. 

\begin{figure}[t]
\begin{center}
\subfigure[Mapping between content and MAC schemes.]
{\includegraphics[width=6cm]{./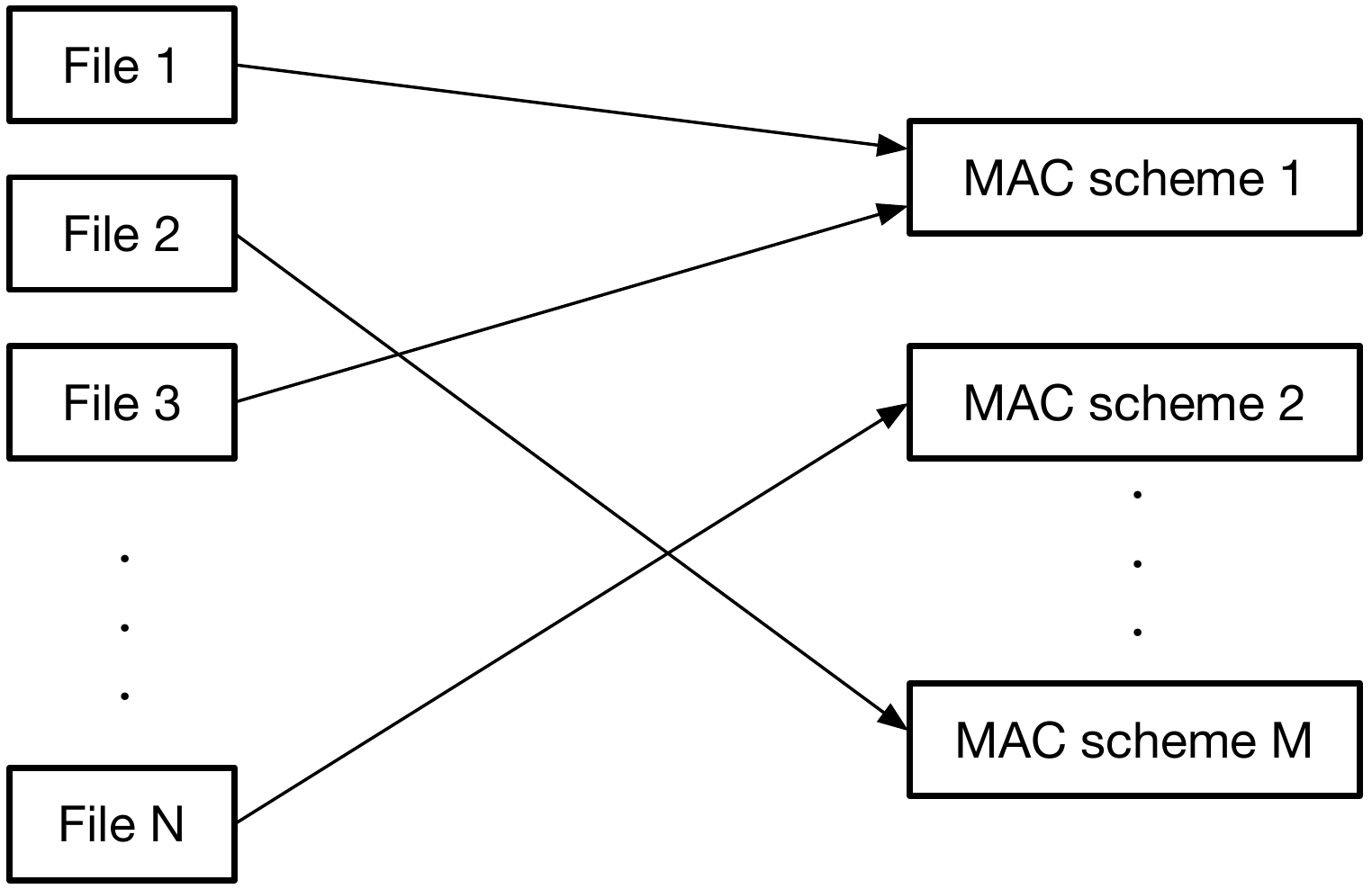}\label{Fig: Mapping}} \\ 
\subfigure[Architecture of a content helper.]
{\includegraphics[width=9cm]{./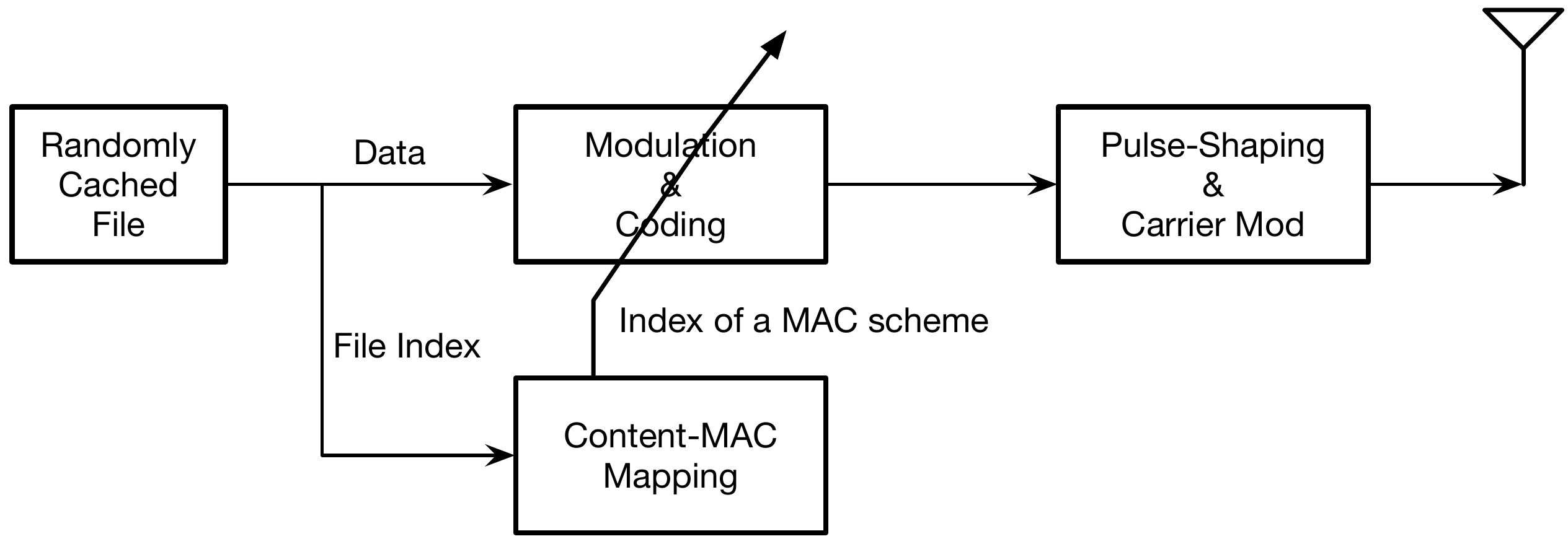}\label{Fig: Typical Helper}}\\
\caption{The design of content adaptive modulation and coding (CAMAC).}
\label{Fig: CAMAC}
\end{center}
\end{figure}

For tractability, two assumptions related to CAMAC are made for the network performance analysis in the sequel. 
\begin{assumption}[Distribution of Transmitted Signals]\label{assump: distribution of transmitted signals}\emph{{The signals transmitted by helpers for sending identical files are identical. Any two transmitted  signals carrying different files are independent r.v.s. 
}}
\end{assumption}

\begin{assumption}[Effective Frequency-Flat Channel]\label{assump: Effective Frequency-Flat Channel}\emph{The effective multi-path channel (see Fig.~\ref{Fig: spatial}) from helpers sending a particular  file to the typical user who needs the file is \emph{frequency flat}. 
}
\end{assumption}
The assumption corresponds to the case where the path-loss exponent is relatively high such that the typical user receives only significant signals from nearby helpers. Consequently, the differentiation in the propagation delay in the signal paths is small and does not induce frequency selectivity. Otherwise, channel equalization is needed at the typical user prior to detection.

\subsection{CAMAC: Signal and Interference}\label{subsec: CAMAC signal and interference}

Given the CAMAC algorithm, the expressions for signal and interference and their distributions are obtained shortly to facilitate the subsequent network-performance analysis. 

Consider the case where the typical user requests $\mathcal{F}_k$. Given CAMAC and Assumptions~\ref{assump: distribution of transmitted signals} and~\ref{assump: Effective Frequency-Flat Channel}, conditioned on $D_0=\mathcal{F}_k$, the signal can be rewritten from \eqref{Eq:Signal} as 
\begin{equation*}
Y_0(\mathcal{F}_k)\!\!=\!\! \l(\sum_{X\in\Phi_k } {H_X} \left| X \right|^{-\frac{\alpha}{2}}\r)\!\! Q_k \!\!+\!\!\sum_{n\neq k} \sum_{Z\in\Phi_n} \!\! \l({H_Z} \left| Z \right|^{-\frac{\alpha}{2}}\r) \!\!Q_n
\end{equation*}
where the first and second terms correspond to (data) signal and interference, respectively. Then the signal and interference power, denoted as $S_0(\mathcal{F}_k)$ and $I_0(\mathcal{F}_k)$, respectively, can be written as
\begin{equation*}
S_0(\mathcal{F}_k)\!\!=\!\!\left|\sum_{X\in\Phi_k }\!\! {H_X} \left| X \right|^{-\frac{\alpha}{2}}\right|^2\!\!,\ I_0(\mathcal{F}_k)\!\!=\!\!\sum_{n\neq k} \left|\sum_{Z\in\Phi_n} \!\! {H_Z} \left| Z \right|^{-\frac{\alpha}{2}}\right|^2\!\!.
\end{equation*}
Since $\{H_X, H_Z\}$ are i.i.d. complex Gaussian r.v.s, the terms $\sum_{X\in\Phi_k } H_X \left| X \right|^{-\frac{\alpha}{2}}$ and $\sum_{Z\in\Phi_n } H_Z \left| Z \right|^{-\frac{\alpha}{2}}$ are distributed as ${\cal{CN}}(0,\sum_{X\in\Phi_k }  \left| X \right|^{-\alpha})$ and ${\cal{CN}}(0,\sum_{Z\in\Phi_n }  \left| Z \right|^{-\alpha})$, respectively.  It follows that the signal power $S_0(\mathcal{F}_k)$ follows the  exponential distribution with mean $\sum_{X\in\Phi_k }  \left| X \right|^{-\alpha}$. Moreover, the  interference power can be written as  $I_0(\mathcal{F}_k)=\sum_{n\neq k } I_n$ where $I_n$ also follows the exponential distribution but with mean $\sum_{Z\in\Phi_n }  \left| Z \right|^{-\alpha}$.  Equivalently,  
\begin{equation}\label{eq: def power with fading}
S_0(\mathcal{F}_k)\sim h_k\sum_{X\in\Phi_k }\left| X \right|^{-\alpha}, \quad I_n\sim h_n\sum_{Z\in\Phi_n }\left| Z \right|^{-\alpha}
\end{equation}
where the operator $\sim$ denotes the equivalence in distribution, and  $\{h_k, h_n\}$ are i.i.d. exponential r.v.s with unit mean.  {Consequently, it is obvious that $S_0(\mathcal{F}_k), \{I_n\}$ are independent r.v.s.}  Using these results, the conditional content-delivery probability defined in Section~\ref{sec: Mathematical Model} can be rewritten as 
\begin{align}\label{Eq: conditional_coverage_def cache one} 
{P}_{d}(\mathcal{F}_k)=\Pr \left(\frac{S_0(\mathcal{F}_k)}{\sum_{n\neq k } I_n} > \theta_k \right)
\end{align}
where $S_0(\mathcal{F}_k)$ and $I_n$ are distributed as in \eqref{eq: def power with fading}. 

{
\begin{remark}[Effects of MAC Schemes] \label{remark: Effects of MAC Schemes}
\emph{The effect of the proposed CAMAC scheme on the analysis of content-delivery probability arises from the simple  fact that two signals carrying the same file and modulated/encoded using the same MAC scheme results in the combining of two channels in the received signal [see (5)]. On the other hand, even if modulated/encoded using the same MAC scheme, two signals carrying two different files cannot lead to the said channel combining. From this aspect, the number of available MAC schemes for transmission does not affect the content-delivery probability. However, in practice, it is necessary to have multiple  MAC schemes to support numerous available date rates, thereby satisfying diversified quality of service (QoS) requirements for different types of content.  In the current model, the QoS requirements are reflected in the SIR thresholds $\{\theta_k\}$.
}
\end{remark}}

{
\begin{remark}[Spatial Alignment in Practice] \label{remark: Spatial Alignment in Practice}
\emph{In practice, the users' requests for an identical file are generated asynchronously but the proposed CAMAC scheme requires synchronization of the transmissions by different helpers.  Based on the CAMAC scheme, the helper synchronization is minimum, namely synchronization of slot boundaries, but their modulation adaptations are independent. In the scenario where a  content file is  small (e.g., a song or the  preview of a video) and thus its  transmission duration is short, forcing the said synchronization  causes delay no more than half the duration, which is compensated by improved reliability in content delivery due to CAMAC. On the other hand, in the scenario where a content file is large (e.g., a movie), the \emph{proactive pushing} technology (see e.g., \cite{zhou2015greendelivery}) can be deployed together with CAMAC to proactively  delivery the file by different helpers to the caches of associated mobile users based on prediction of their future demands  using  information on past requests. In this case, the helper synchronization does not incur any additional latency as experienced by mobile users while the link reliability is enhanced by CAMAC. }
\end{remark}}

\section{Distribution of shot-noise ratio}\label{sec: Distribution of a shot-noise ratio}
One can see from \eqref{Eq: conditional_coverage_def cache one} the content-delivery probability is closely related to a process of shot-noise ratio. For the purpose of deriving the probability, we analyze the distribution of the shot-noise ratio $R(\lambda_1, \lambda_2)$ defined in Section~\ref{sec: first def shot noise ratio}. Specifically, in this section, the CCDF  in \eqref{Eq:CCDF:ShotR} and the Laplace function in \eqref{Eq:Lapace:ShotR} are derived. For ease of notation, they are represented by  $\bar{F}\left(x; \lambda_1, \lambda_2\right)$ with $x> 0$ and $\mathcal{L}_{R\left(\lambda_1,\lambda_2\right)}(s)$ with any complex $s$ for which the result converges, respectively. 

The derivation approach is as follows. We define a new process called a  \emph{differential shot noise process} as the difference between two independent shot noise process $S(\Phi_1)$ and $S(\Phi_2)$ as defined in \eqref{Eq:Shot:Def} without fading.  Mathematically, given a constant $x\in \mathbb{R}^+$, the differential shot noise process, denoted as $M(x; \lambda_1, \lambda_2)$, is defined as 
\begin{equation}
M(x; \lambda_1, \lambda_2) =\sum_{X \in \Phi_1}|X|^{-\alpha}-x\sum_{Z \in \Phi_2}|Z|^{-\alpha}.\label{Eq:DiffShot:Def}
\end{equation} 
Then the CCDF of $R(\lambda_1, \lambda_2)$ is equivalent to the zero-crossing probability of $M(x; \lambda_1, \lambda_2)$: 
\begin{equation}\label{Eq: shot noise minus shot noise}
\boxed {\bar{F}\left(x; \lambda_1, \lambda_2\right)=\Pr\left(M(x)>0\right)}. 
\end{equation}
It is proved in  Section~\ref{Section:DiffShot} that the distribution of $M(x; \lambda_1, \lambda_2)$ belongs to the class of \emph{stable distribution}. Then using this factor and by exploiting the equivalence  in \eqref{Eq: shot noise minus shot noise}, the theory of stable distribution is applied in Section~\ref{Section:ShotNoise} to derive the  desired distribution functions for the shot-noise ratio $R(\lambda_1, \lambda_2)$.

%% subsection CF of shot noise - shot noise
\subsection{ Distribution of a Differential Shot-Noise Process}\label{Section:DiffShot}
In this section, the characteristic function for the differential shot noise process $M(x; \lambda_1, \lambda_2)$ is first derived. Comparing the function with that for the class of stable distribution leads to the conclusion that the process belongs to the class. The details are as follows. 

As the concept of stable distribution is repeatedly used in this and next sub-sections, it is useful to provide the definition (see e.g.,  \cite{nolan:StableDistribution})  as follows.  

\begin{definition}[Stable Distribution] \label{Def: stable distribution}
\emph{
A r.v. $X$ is \emph{stable} in the broad sense   if for any positive constants $a$ and $b$, there exist some positive $c$ and some $d \in \mathbb{R}$ such that 
\begin{equation}\label{eq: stable distribution Def}
aX_1+bX_2=cX+d
\end{equation}
where $X_1$ and $X_2$ are independent and identically distributed  r.v.s having  the same distribution as $X$.  If \eqref{eq: stable distribution Def} holds with $d = 0$, then $X$ is strictly stable or stable in the narrow sense.
}
\end{definition}

The definition implies that the distribution shape of $X$ is preserved under a  linear operation with positive parameters. The characteristic function and some useful properties of the class of stable distribution are provided in Appendix~\ref{app: stable distribution}.

It is well known that a shot noise process $S(\Phi)$ with $\alpha > 2$ belongs to the class of stable distribution \cite{PowerLawShotNoise}. However, it is unknown that if the distribution of $M(x; \lambda_1, \lambda_2)$ is also stable. The answer is affirmative as shown in Proposition~\ref{pro: CF marked point process}. To prove the result, first, the characteristic function of $M(x; \lambda_1, \lambda_2)$, defined as $G(t; \lambda_1, \lambda_2) = \E\l[e^{jtM(x; \lambda_1, \lambda_2)}\r]$ with $t\in \mathds{R}$, is derived as follows.

\begin{lemma}[Characteristic Function of Differential Shot Noise]\label{pro: CF marked point process} \emph{Given $\alpha>2$, the  characteristic function of the differential shot noise process $M(x; \lambda_1, \lambda_2)$  satisfies 
\begin{align} 
\log G\left(t\right)=-&\left(\lambda_1+\lambda_2 x^{\frac{2}{\alpha}}\right) \pi \Gamma\left(1-\frac{2}{\alpha}\right) \cos\frac{\pi}{\alpha}\ \left|t \right|^{\frac{2}{\alpha}}\nn\\
&\times \left(1-j \mathrm{sgn}(t)\frac{\lambda_1-\lambda_2 x^{\frac{2}{\alpha}}}{\lambda_1+\lambda_2 x^{\frac{2}{\alpha}}}\tan\frac{\pi}{\alpha} \right)
\end{align} 
where the sign function $\mathrm{sgn}(t)$ follows the definition in  \eqref{eq: sgn}. 
 }
\end{lemma}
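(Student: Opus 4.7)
\proof The plan is to compute $G(t)=\E[e^{jtM(x;\lambda_1,\lambda_2)}]$ directly from the PPP structure and then massage the result into the stated trigonometric form. Since $\Phi_1$ and $\Phi_2$ are independent, the differential shot noise $M(x;\lambda_1,\lambda_2)=S(\Phi_1)-xS(\Phi_2)$ has a characteristic function that factors as $G(t)=G_1(t)\,G_2(-xt)$, where $G_i(t)=\E[\exp(jt\sum_{X\in\Phi_i}|X|^{-\alpha})]$. Each factor is a Laplace-type functional of a homogeneous PPP with an imaginary argument, so Campbell's theorem yields
\begin{equation*}
\log G_i(t)=-\lambda_i\int_{\mathbb{R}^2}\bigl(1-e^{jt|y|^{-\alpha}}\bigr)\,dy=-2\pi\lambda_i\int_0^\infty\bigl(1-e^{jtr^{-\alpha}}\bigr)r\,dr.
\end{equation*}

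Next I would pass to the substitution $u=r^{-\alpha}$, which converts the spatial integral into the one-dimensional form $\tfrac{1}{\alpha}\int_0^\infty(1-e^{jtu})\,u^{-2/\alpha-1}\,du$. Combined with the elementary identity (obtained by integration by parts for $0<\beta<1$, $\mathrm{Re}(s)>0$)
\begin{equation*}
\int_0^\infty \bigl(1-e^{-su}\bigr)u^{-\beta-1}\,du=\frac{\Gamma(1-\beta)}{\beta}\,s^{\beta},
\end{equation*}
and its analytic continuation to $s=-jt$ (using the principal branch, so that $(-jt)^{\beta}=|t|^{\beta}e^{-j(\pi\beta/2)\,\mathrm{sgn}(t)}$), I obtain
\begin{equation*}
\log G_1(t)=-\lambda_1\pi\Gamma\!\left(1-\tfrac{2}{\alpha}\right)|t|^{2/\alpha}\,e^{-j(\pi/\alpha)\mathrm{sgn}(t)},
\end{equation*}
and by the same computation applied with $t$ replaced by $-xt$,
\begin{equation*}
\log G_2(-xt)=-\lambda_2 x^{2/\alpha}\pi\Gamma\!\left(1-\tfrac{2}{\alpha}\right)|t|^{2/\alpha}\,e^{+j(\pi/\alpha)\mathrm{sgn}(t)}.
\end{equation*}

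Finally, summing the two logarithms and applying Euler's formula $e^{\pm j\pi/\alpha}=\cos(\pi/\alpha)\pm j\sin(\pi/\alpha)$, the bracketed complex coefficient becomes
\begin{equation*}
\lambda_1 e^{-j(\pi/\alpha)\mathrm{sgn}(t)}+\lambda_2 x^{2/\alpha}e^{+j(\pi/\alpha)\mathrm{sgn}(t)}=(\lambda_1+\lambda_2 x^{2/\alpha})\cos\tfrac{\pi}{\alpha}-j\,\mathrm{sgn}(t)(\lambda_1-\lambda_2 x^{2/\alpha})\sin\tfrac{\pi}{\alpha},
\end{equation*}
which factors as $(\lambda_1+\lambda_2 x^{2/\alpha})\cos(\pi/\alpha)\bigl(1-j\,\mathrm{sgn}(t)\tfrac{\lambda_1-\lambda_2 x^{2/\alpha}}{\lambda_1+\lambda_2 x^{2/\alpha}}\tan(\pi/\alpha)\bigr)$; this is exactly the stated expression.

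The main obstacle is the rigorous justification of the complex-argument step: the formula for $\int_0^\infty(1-e^{-su})u^{-\beta-1}du$ is classical only for $\mathrm{Re}(s)>0$, whereas here $s=\mp jt$ lies on the imaginary axis. I would handle this by inserting a regulator $e^{-\epsilon r}$ in the spatial integral (or equivalently $s=\epsilon\mp jt$ with $\epsilon>0$), justifying dominated convergence as $\epsilon\downarrow 0$ using the integrability at both ends guaranteed by $2<\alpha$, and then picking the principal branch of $s^{2/\alpha}$ to obtain the signed exponentials above. Once this analytic continuation is in hand, everything else is algebraic manipulation. \endproof
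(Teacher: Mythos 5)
Your proposal is correct and follows essentially the same route as the paper: factor $G(t)$ using the independence of $\Phi_1$ and $\Phi_2$, express each factor as the closed-form characteristic function of a power-law shot noise, and then use $e^{\pm j\pi/\alpha}=\cos\frac{\pi}{\alpha}\pm j\sin\frac{\pi}{\alpha}$ to collect the two exponentials into the stated $\cos$--$\tan$ form. The only difference is cosmetic: the paper simply quotes the known shot-noise characteristic function $e^{-\lambda\pi\Gamma(1-2/\alpha)(\mp jt)^{2/\alpha}}$ from the stochastic-geometry literature, whereas you rederive it via Campbell's theorem with an explicit $\epsilon$-regularization to justify the principal-branch evaluation on the imaginary axis, which is a slightly more self-contained (and more careful) treatment of the same step.
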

\proof See Appendix~\ref{app: CF marked point process}.

Comparing the derived characteristic function with the Form-A characteristic function for stable distribution given in Lemma~\ref{lemma: CF stable distribution} in Appendix~\ref{app: stable distribution} yields the following result. 

\begin{proposition}[Stable Distribution of Differential Shot Noise]\label{pro: marked PP is stable distribution}\emph{The process of differential shot noise, $M(x; \lambda_1, \lambda_2)$ belongs to the  the class of stable distribution:
\begin{equation}
M(x; \lambda_1, \lambda_2)  \sim  S_A(\delta, \beta_A, 0, \mu_A)
\end{equation}
where $S_A$ represents the class of stable distribution in Form A as defined in Appendix~\ref{app: stable distribution} and the parameters are given as: 
\begin{align}
\delta &= \frac{2}{\alpha}, \nn\\
 \beta_A &= \frac{\lambda_1-\lambda_2 x^{\frac{2}{\alpha}}}{\lambda_1+\lambda_2 x^{\frac{2}{\alpha}}},\nn\\
 \mu_A & =  \left(\lambda_1+\lambda_2 x^{\frac{2}{\alpha}}\right)\pi \Gamma \left(1-\frac{2}{\alpha}\right) \cos\frac{\pi}{\alpha}.
\end{align}
}
\end{proposition}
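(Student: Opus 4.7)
The plan is straightforward: verify that the log-characteristic function derived in Lemma~\ref{pro: CF marked point process} coincides with the canonical Form-A representation of a stable characteristic function recalled in Appendix~\ref{app: stable distribution}, and then read off the four defining parameters (stability index, skewness, location, and scale) by direct inspection.

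First I would write down the Form-A log-characteristic function of a stable random variable, recalled in Lemma~\ref{lemma: CF stable distribution}, which for stability index $\delta \in (0,2)\setminus\{1\}$ takes the canonical shape
\begin{equation*}
\log \phi_X(t) = j\mu t - \mu_A |t|^{\delta}\left(1 - j\beta_A \,\mathrm{sgn}(t)\, \tan\frac{\pi\delta}{2}\right),
\end{equation*}
where $\mu$ is the location, $\mu_A$ the scale, and $\beta_A$ the skewness. Comparing this template term-by-term with the expression for $\log G(t)$ supplied by Lemma~\ref{pro: CF marked point process}, three observations are immediate: the exponent on $|t|$ is $2/\alpha$, so the stability index must be $\delta = 2/\alpha$; the trigonometric factor $\tan(\pi/\alpha)$ inside the parenthesis equals $\tan(\pi\delta/2)$ precisely when $\delta = 2/\alpha$, providing a consistency check rather than an independent constraint; and there is no additive term linear in $t$, forcing the location parameter to vanish. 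Matching the multiplicative prefactor in front of $|t|^{2/\alpha}$ then identifies $\mu_A = (\lambda_1 + \lambda_2 x^{2/\alpha})\pi\Gamma(1-2/\alpha)\cos(\pi/\alpha)$, and matching the coefficient of $j\,\mathrm{sgn}(t)\tan(\pi/\alpha)$ inside the parenthesis identifies $\beta_A = (\lambda_1 - \lambda_2 x^{2/\alpha})/(\lambda_1 + \lambda_2 x^{2/\alpha})$.

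Next I would verify admissibility of the identified parameters, since a stable distribution requires $\delta \in (0,2]$ and $\beta_A \in [-1,1]$. The hypothesis $\alpha > 2$ carried over from Lemma~\ref{pro: CF marked point process} gives $\delta = 2/\alpha \in (0,1)$, so the Form-A parameterization is well-defined, with $\Gamma(1-2/\alpha) > 0$ and $\cos(\pi/\alpha) > 0$, and hence $\mu_A > 0$; and $|\beta_A| \le 1$ is immediate from the non-negativity of $\lambda_1$, $\lambda_2$, and $x$. Since the characteristic function uniquely determines a distribution, these matches together yield $M(x;\lambda_1,\lambda_2) \sim S_A(\delta,\beta_A,0,\mu_A)$ as stated.

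I do not anticipate any genuine mathematical obstacle here: the substantive work has already been carried out in establishing Lemma~\ref{pro: CF marked point process}. The only real risk is a bookkeeping slip with the sign or normalization convention of the Form-A parameterization — writing $1 + j\beta_A\,\mathrm{sgn}(t)\tan(\pi\delta/2)$ instead of $1 - j\beta_A\,\mathrm{sgn}(t)\tan(\pi\delta/2)$, for instance, would flip the sign of $\beta_A$ and produce a $\beta_A$ that disagreed with the claim. I would therefore cross-check the exact convention adopted in Lemma~\ref{lemma: CF stable distribution} before declaring the identification complete.
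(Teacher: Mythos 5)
Your proposal is correct and follows the same route as the paper: the paper's own "proof" is simply the remark that comparing the characteristic function from Lemma~\ref{pro: CF marked point process} with the Form-A expression in Lemma~\ref{lemma: CF stable distribution} (with $\gamma_A=0$, $\delta=2/\alpha$, so $\tan(\pi\delta/2)=\tan(\pi/\alpha)$) identifies the parameters, which is exactly your term-by-term matching. Your added admissibility checks ($\delta\in(0,1)$, $\mu_A>0$, $|\beta_A|\le 1$) are a sensible extra verification but not a different argument.
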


By setting $x = 0$, the stable distribution of $M(0; \lambda_1, \lambda_2)$ reduces to a shot noise process, $S_A\left(\frac{2}{\alpha}, 1, 0, \lambda_1 \pi \Gamma\left(1-\frac{2}{\alpha}\right)\right)$ as mentioned in \cite{PowerLawShotNoise}.

\subsection{Distribution of a Shot-Noise Ratio}\label{Section:ShotNoise}

\subsubsection{CCDF of a shot-noise ratio}The CCDF of a shot-noise ratio is derived by exploiting the equivalence in \eqref{Eq: shot noise minus shot noise} and applying the stable-distribution properties of  the differential shot-noise process $M(x; \lambda_1, \lambda_2)$ as discussed in the preceding sub-section.

As shown in Proposition~\ref{pro: marked PP is stable distribution}, $M(x; \lambda_1, \lambda_2)$ has the stable distribution $S_A(\delta, \beta_A, 0, \mu_A)$.  To apply the properties of Form-B stable distribution with a normalized parameter $\mu_B=1$ in Lemma~\ref{lemma: Special cases of stable distribution} in Appendix~\ref{app: stable distribution}, the distribution of $M(x; \lambda_1, \lambda_2)$ can be converted into Form B by using the parametric relations in \eqref{eq: relation equal to 1} and \eqref{eq: Relation Form A and B delta neq 1} in Appendix~\ref{app: stable distribution}. Moreover, it is necessary to scaling the process $M(x; \lambda_1, \lambda_2)$ to have a normalized parameter $\mu_B= 1$.  Let the resultant process be denoted as $\widetilde{M}(x; \lambda_1, \lambda_2)$.  Its distribution is derived as shown in the following lemma. 
\begin{lemma}[Normalized Differential Shot Noise]\label{lemma: transfer differential shot noise}
\emph{The distribution of the normalize shot-noise process, $\widetilde{M}(x; \lambda_1, \lambda_2)$, and its relation with that of the original process $M(x; \lambda_1, \lambda_2)$ are given as follows: 
\begin{align}
\widetilde{M}(x; \lambda_1, \lambda_2) &\sim S_B\l(\frac{2}{\alpha}, \beta_B, 0, 1\r)\label{Eq:Sim:1}\\
M(x; \lambda_1, \lambda_2) &\sim{ \mu_B}^{\frac{\alpha}{2}} \widetilde{M}(x; \lambda_1, \lambda_2) \label{Eq:Sim:2}
\end{align}
where $S_B$ denotes Form B of stable distribution as specified in Lemma~\ref{lemma: CF stable distribution} and the parameters are given as: 
\begin{align}
\beta_B &= \frac{\alpha}{\pi}\arctan\left(\frac{\lambda_1-\lambda_2 x^{\frac{2}{\alpha}}}{\lambda_1+\lambda_2 x^{\frac{2}{\alpha}}} \tan\frac{\pi}{\alpha}\right), \nn\\
\mu_B  &= \frac{ \left(\lambda_1+\lambda_2 x^{\frac{2}{\alpha}}\right)\pi \Gamma \left(1-\frac{2}{\alpha}\right) \cos\frac{\pi}{\alpha}}{\cos\frac{\pi \beta_B}{\alpha}}.
\end{align}} 
\end{lemma}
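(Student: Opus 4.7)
The plan is to start from Proposition~\ref{pro: marked PP is stable distribution}, which already establishes $M(x;\lambda_1,\lambda_2) \sim S_A(2/\alpha,\beta_A,0,\mu_A)$ with the stated $\beta_A$ and $\mu_A$, and then perform two elementary manipulations in succession: a reparametrization from Form A to Form B using the conversion formulas \eqref{eq: relation equal to 1} and \eqref{eq: Relation Form A and B delta neq 1} in Appendix~\ref{app: stable distribution}, followed by a scalar rescaling that normalizes the scale parameter to unity.

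For the reparametrization, since $\delta = 2/\alpha < 1$, the $\delta \neq 1$ conversion applies and specializes to $\beta_B = (\alpha/\pi)\arctan\!\bigl(\beta_A \tan(\pi/\alpha)\bigr)$ and $\mu_B = \mu_A/\cos(\pi\beta_B/\alpha)$. Substituting the explicit $\beta_A=(\lambda_1-\lambda_2 x^{2/\alpha})/(\lambda_1+\lambda_2 x^{2/\alpha})$ and $\mu_A=(\lambda_1+\lambda_2 x^{2/\alpha})\pi\,\Gamma(1-2/\alpha)\cos(\pi/\alpha)$ from Proposition~\ref{pro: marked PP is stable distribution} and simplifying with $\pi\delta/2 = \pi/\alpha$ yields exactly the expressions for $\beta_B$ and $\mu_B$ announced in the lemma. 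This gives $M(x;\lambda_1,\lambda_2) \sim S_B(2/\alpha,\beta_B,0,\mu_B)$.

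To normalize the scale parameter, I would invoke the scaling property of the Form B parametrization: substituting $ct$ (for $c>0$) into the Form B characteristic function given in Lemma~\ref{lemma: CF stable distribution} shows that $cY \sim S_B(\delta,\beta,0,c^\delta\mu)$ whenever $Y \sim S_B(\delta,\beta,0,\mu)$. Choosing $c = \mu_B^{-1/\delta} = \mu_B^{-\alpha/2}$ then gives $\widetilde{M}(x;\lambda_1,\lambda_2) := \mu_B^{-\alpha/2}\, M(x;\lambda_1,\lambda_2) \sim S_B(2/\alpha,\beta_B,0,1)$, which is \eqref{Eq:Sim:1}; rearranging yields \eqref{Eq:Sim:2}.

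The main obstacle, such as it is, is purely algebraic bookkeeping: one must verify that the $(\lambda_1+\lambda_2 x^{2/\alpha})$ prefactor inside $\mu_A$ combines cleanly with the $\cos(\pi\beta_B/\alpha)$ denominator to produce the announced $\mu_B$ without leftover terms. Because $\alpha > 2$ forces $\pi/\alpha \in (0,\pi/2)$ and hence $\tan(\pi/\alpha) > 0$, and because $\beta_A \in [-1,1]$, the argument of $\arctan$ inside $\beta_B$ stays within the principal branch, so no branch ambiguity arises. No probabilistic content beyond the scaling property and the Form A $\to$ Form B reparametrization is required.
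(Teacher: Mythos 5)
Your proposal is correct and follows essentially the same route as the paper: invoke Proposition~\ref{pro: marked PP is stable distribution}, convert Form A to Form B via the relations in \eqref{eq: Relation Form A and B delta neq 1} (with $K(\delta)=\delta=2/\alpha$ since $\alpha>2$), and then normalize the scale parameter by the substitution $c=\mu_B^{-\alpha/2}$, which is exactly the characteristic-function rescaling $t = t'\mu_B^{-\alpha/2}$ used in the paper's appendix. The algebra you flag as the only obstacle indeed checks out and reproduces the stated $\beta_B$ and $\mu_B$.
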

The proof is given in Appendix~\ref{app: transfer differential shot noise}. It follows from Lemma~\ref{lemma: transfer differential shot noise} that the zero-crossing probability for $M(x; \lambda_1, \lambda_2)$ is equal to that for the normalized counterpart $\widetilde{M}(x; \lambda_1, \lambda_2)$, which is given in Lemma~\ref{lemma: Special cases of stable distribution} in Appendix~\ref{app: stable distribution}. Combining this result and the equivalence in \eqref{Eq: shot noise minus shot noise} yields the CCDF for the shot-noise ratio as elaborated below. 

\begin{proposition}[CCDF of a Shot-Noise Ratio]\label{pro: CCDF ratio of shot noise}\emph{Given $\alpha>2$, the CCDF of the process of shot-noise ratio, $R(\lambda_1,\lambda_2)$,  is given as 
\begin{align}\label{eq:ccdf ratio of shot noise}
%\bar{F}\left(x; \lambda_1, \lambda_2\right)&=\frac{1}{2}+\frac{\alpha}{2\pi}\arctan\left(\frac{\lambda_1-\lambda_2 x^{\frac{2}{\alpha}}}{\lambda_1+\lambda_2 x^{\frac{2}{\alpha}}}\tan\frac{\pi}{\alpha}\right). 
\bar{F}\left(x; \lambda_1, \lambda_2\right)\!=\!\frac{\alpha}{2\pi}\arctan\left(\Big(\!-1\!+\frac{2}{1+\frac{\lambda_2 }{\lambda_1}x^{\frac{2}{\alpha}}}\Big)\tan\frac{\pi}{\alpha}\right) +\frac{1}{2}. 
\end{align}
}
\end{proposition}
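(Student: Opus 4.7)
The strategy is to exploit the equivalence \eqref{Eq: shot noise minus shot noise} and apply the stable-distribution machinery already developed in Section~\ref{Section:DiffShot}. First I would write $\bar{F}(x;\lambda_1,\lambda_2)=\Pr\bigl(M(x;\lambda_1,\lambda_2)>0\bigr)$. By Lemma~\ref{lemma: transfer differential shot noise}, the distribution of $M(x;\lambda_1,\lambda_2)$ coincides with that of $\mu_B^{\alpha/2}\,\widetilde{M}(x;\lambda_1,\lambda_2)$, where $\widetilde{M}(x;\lambda_1,\lambda_2)\sim S_B(2/\alpha,\beta_B,0,1)$. Since $\mu_B>0$ (as $\alpha>2$ forces $\Gamma(1-2/\alpha)>0$ and $\cos(\pi/\alpha)>0$, while $|\beta_B|<1$ forces $\cos(\pi\beta_B/\alpha)>0$), multiplication by the positive scalar $\mu_B^{\alpha/2}$ preserves the sign, so the zero-crossing probability reduces to $\Pr\bigl(\widetilde{M}(x;\lambda_1,\lambda_2)>0\bigr)$.

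Second, I would invoke the appendix result (Lemma~\ref{lemma: Special cases of stable distribution}) on normalized Form-B stable variables, which supplies the closed-form zero-crossing probability $\Pr(Y>0)=\tfrac{1}{2}+\tfrac{\beta_B}{2}$ for $Y\sim S_B(\delta,\beta_B,0,1)$. Substituting the expression for $\beta_B$ from Lemma~\ref{lemma: transfer differential shot noise} yields
\begin{equation*}
\bar{F}(x;\lambda_1,\lambda_2)=\frac{1}{2}+\frac{\alpha}{2\pi}\arctan\!\left(\frac{\lambda_1-\lambda_2 x^{2/\alpha}}{\lambda_1+\lambda_2 x^{2/\alpha}}\tan\frac{\pi}{\alpha}\right).
\end{equation*}

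The final step is purely algebraic: dividing numerator and denominator by $\lambda_1$ and using the identity $\frac{1-\frac{\lambda_2}{\lambda_1}x^{2/\alpha}}{1+\frac{\lambda_2}{\lambda_1}x^{2/\alpha}}=-1+\frac{2}{1+\frac{\lambda_2}{\lambda_1}x^{2/\alpha}}$ inside the $\arctan$ brings the expression into the target form \eqref{eq:ccdf ratio of shot noise}.

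The heavy lifting has already been done upstream: Proposition~\ref{pro: marked PP is stable distribution} establishes membership in the stable class, and Lemma~\ref{lemma: transfer differential shot noise} converts the Form-A parameters into normalized Form-B parameters. Assuming the appendix identity $\Pr(Y>0)=\tfrac{1}{2}+\tfrac{\beta_B}{2}$ is in hand, this proposition is essentially a three-line composition: reduction to a zero-crossing event, positivity-preserving rescaling, and substitution. The only genuine obstacle I anticipate is careful bookkeeping of the parameter transformations between Forms A and B together with verifying that the normalization constant $\mu_B$ is strictly positive so that the sign of $\widetilde{M}$ agrees with that of $M$; both points are handled by the upstream lemma and the $\alpha>2$ regime.
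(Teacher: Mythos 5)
Your proposal is correct and follows essentially the same route as the paper: reduce the CCDF to the zero-crossing probability of $M(x;\lambda_1,\lambda_2)$ via \eqref{Eq: shot noise minus shot noise}, pass to the normalized Form-B variable through Lemma~\ref{lemma: transfer differential shot noise} (your check that $\mu_B>0$ is the right justification for sign preservation), and apply Lemma~\ref{lemma: Special cases of stable distribution} before substituting $\beta_B$ and simplifying. The only unstated detail is that the appendix formula gives $\Pr(Y<0)=\tfrac{1}{2}\bigl(1-\beta_B K(\delta)/\delta\bigr)$, so your identity $\Pr(Y>0)=\tfrac{1}{2}+\tfrac{\beta_B}{2}$ implicitly uses $K(\delta)=\delta$ for $\delta=2/\alpha<1$ together with continuity of the stable law at zero, both of which are immediate.
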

Since $\arctan$ is a monotone increasing function, we can see from the above result that $\bar{F}\left(x; \lambda_1, \lambda_2\right)$ monotonically decreases with the growing ratio  ${\lambda_2}/{\lambda_1}$ besides  $x$. In other words, the function does not depend on the exact values of the densities. For the extreme cases of $x = 0$ and $x\rightarrow \infty$, the expression for $\bar{F}\left(x; \lambda_1, \lambda_2\right)$ reduces to $1$ and $0$, respectively, since the shot-noise ratio is always positive and a proper r.v. 

\subsubsection{Laplace function  of a shot-noise ratio} It is straightforward to derive the Laplace function of a shot-noise process $S(\Phi)$ in \eqref{Eq:Shot:Def} by applying Campbell's Theorem (see e.g.,\cite{kingman1993poisson}). This is not true for the shot-noise ratio, $R\left(\lambda_1,\lambda_2\right)$, mainly due to a shot-noise process in its denominator. Specifically, applying Campbell's Theorem reduces the two shot-noise processes in the characteristic function of  $R\left(\lambda_1,\lambda_2\right)$ to be only one in denominator: 
\begin{align}
\mathcal{L}_{R\left(\lambda_1,\lambda_2\right)}(s)&=\E\left[\prod_{X\in \Phi_1}e^{-\frac{s\left|X\right|^{-\alpha}}{\sum_{Z\in \Phi_2}\left|Z\right|^{-\alpha}}}\right] \nn \\
&=\E\Bigg[e^{-\lambda_1 \pi \Gamma\left(1-\frac{2}{\alpha}\right)\left(\frac{s}{\sum_{Z\in \Phi_2}\left|Z\right|^{-\alpha}}\right)^{\frac{2}{\alpha}}}\Bigg].\label{eq: mid LT shot-noise ratio}
\end{align}
The current approach for deriving the function in closed form from \eqref{eq: mid LT shot-noise ratio} relies on applying  the following expansion of the \emph{probability distribution function} (PDF) of a shot-noise process \cite{PowerLawShotNoise}: 
\begin{align}\label{pdf: shot noise}
f_{S(\Phi)}(x)=\frac{1}{\pi x }\sum_{m=1}^{\infty}&\frac{(-1)^{m+1}\Gamma(1+m\frac{2}{\alpha})\sin \pi m \frac{2}{\alpha}}{m !}\nn\\
&\times \left(\frac{\lambda \pi \Gamma\left(1-\frac{2}{\alpha}\right)}{x^{\frac{2}{\alpha}}}\right)^m.
\end{align}
The result is given in the following proposition. 
\begin{proposition}[Laplace Function of a Shot-Noise Ratio]\label{pro: LT ratio of shot noise process} \emph{Consider a shot-noise ratio $R\left(\lambda_1,\lambda_2\right)$ generated by  two independent homogeneous PPPs $\Phi_1$ and $\Phi_2$ with density $\lambda_1$ and $\lambda_2$ separately.  The  Laplace function  of $R\left(\lambda_1, \lambda_2\right)$ can be written in the following series form: 
\begin{align}\label{eq: LT ratio of shot noise process}
\mathcal{L}_{R\left(\lambda_1,\lambda_2\right)}\left(s\right)=\sum_{m=1}^{\infty}\frac{(-1)^{m+1}}{\Gamma(1-m\frac{2}{\alpha})}\left(\frac{\lambda_2}{\lambda_1}s^{-\frac{2}{\alpha}}\right)^m.
\end{align}
}
\end{proposition}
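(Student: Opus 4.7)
The plan is to start from the intermediate expression in equation (14), which already applies Campbell's theorem to the inner PPP $\Phi_1$ conditional on $\Phi_2$, leaving a single expectation
\[
\mathcal{L}_{R(\lambda_1,\lambda_2)}(s)=\mathbb{E}\!\left[e^{-c\,s^{2/\alpha}\,T^{-2/\alpha}}\right],\qquad c:=\lambda_1\pi\,\Gamma\!\l(1-\tfrac{2}{\alpha}\r),
\]
where $T=\sum_{Z\in\Phi_2}|Z|^{-\alpha}$ is a (scalar) shot-noise random variable whose PDF admits the series expansion \eqref{pdf: shot noise} with density $\lambda_2$. The idea is to substitute that series into the integral $\int_0^\infty e^{-c s^{2/\alpha} x^{-2/\alpha}} f_T(x)\,dx$, swap summation and integration, and reduce each term to a Gamma integral.

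Concretely, after writing $f_T(x)=\frac{1}{\pi x}\sum_{m\ge 1}a_m x^{-2m/\alpha}$ with the coefficients $a_m$ read off from \eqref{pdf: shot noise}, each term becomes
\[
\int_0^\infty e^{-c s^{2/\alpha} x^{-2/\alpha}}\,x^{-1-2m/\alpha}\,dx.
\]
The substitution $u=c s^{2/\alpha}x^{-2/\alpha}$ turns this integral into $\tfrac{\alpha}{2}(cs^{2/\alpha})^{-m}\,\Gamma(m)$, so all the $x$-dependence collapses to a single power of $s^{-2/\alpha}$, and the ratio $(\lambda_2\pi\Gamma(1-\tfrac{2}{\alpha}))^m/c^m$ produces the clean factor $(\lambda_2/\lambda_1)^m$. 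Collecting the prefactors and cancelling $(m-1)!/m!=1/m$ together with $\Gamma(1+\tfrac{2m}{\alpha})\cdot\tfrac{\alpha}{2m}=\Gamma(\tfrac{2m}{\alpha})$ leaves each term proportional to $\Gamma(\tfrac{2m}{\alpha})\sin(\tfrac{2\pi m}{\alpha})$. The final step is the Euler reflection identity $\Gamma(z)\sin(\pi z)=\pi/\Gamma(1-z)$ applied with $z=2m/\alpha$, which converts this product into $\pi/\Gamma(1-\tfrac{2m}{\alpha})$ and yields exactly \eqref{eq: LT ratio of shot noise process}.

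The main obstacle is the justification of interchanging the infinite sum with the integral, since the series representation of $f_T$ in \eqref{pdf: shot noise} is not absolutely convergent on all of $(0,\infty)$ and must be treated as the large-$x$ tail expansion valid for $x$ bounded away from zero. The exponential factor $e^{-c s^{2/\alpha}x^{-2/\alpha}}$ is precisely what tames the small-$x$ region: it provides super-polynomial decay as $x\to 0^+$ that dominates every power $x^{-1-2m/\alpha}$, so termwise integration can be carried out on $[\epsilon,\infty)$ for each $\epsilon>0$ and the contribution of $[0,\epsilon]$ shown to vanish by a dominated convergence argument. I would therefore pause the algebraic derivation long enough to verify this exchange (e.g.\ by splitting the interval and using the Fubini–Tonelli theorem on the absolutely convergent piece), after which the computation described above goes through directly and produces the claimed series, valid for all complex $s$ with $\operatorname{Re}(s^{2/\alpha})>0$ at which the series converges.
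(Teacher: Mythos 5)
Your derivation is essentially identical to the paper's own proof of Proposition~\ref{pro: LT ratio of shot noise process}: it starts from the intermediate expression \eqref{eq: mid LT shot-noise ratio}, inserts the series density \eqref{pdf: shot noise} for the denominator shot noise, interchanges summation and integration, evaluates each term as a Gamma integral via the substitution $u=c\,s^{2/\alpha}x^{-2/\alpha}$, and closes with the reflection formula $\Gamma(z)\Gamma(1-z)=\pi/\sin(\pi z)$, exactly as the paper does. The only difference is your added caution about the interchange, which the paper simply asserts via Fubini's theorem; for the record, the series \eqref{pdf: shot noise} in fact converges absolutely for every $x>0$ when $\alpha>2$ (the coefficients $\Gamma(1+2m/\alpha)/m!$ decay super-exponentially), so the delicate point is not the small-$x$ region you single out but the $\Gamma(m)$ growth of the termwise integrals, a subtlety that both your argument and the paper's leave implicit.
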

The proof is provided in  Appendix~\ref{app: LT ratio of shot noise process}.

Similar to the result in Proposition~\ref{pro: CCDF ratio of shot noise}, the above Laplace function depends on the ratio $\lambda_2/\lambda_1$ but not the actual values of densities. The case of highly asymmetric shot-noise densities in $R(\lambda_1,\lambda_2)$ corresponds to $\lambda_2/\lambda_1\rightarrow 0$. For this case, the Laplace function has the following simple asymptotic form: 
\begin{align}\nn
\mathcal{L}_{R\left(\lambda_1,\lambda_2\right)}\left(s\right)=\frac{1}{\Gamma(1-\frac{2}{\alpha})}\frac{\lambda_2}{\lambda_1}s^{-\frac{2}{\alpha}} + O\l(\!\l(\!\frac{\lambda_2}{\lambda_1}\!\r)^2\r), \ \frac{\lambda_2}{\lambda_1}\rightarrow 0
\end{align}
that is a linear function of the ratio $\lambda_2/\lambda_1$. 
%Last, it is worth mentioning that the summation in the Laplace function in Proposition xxx always converges though individual terms may not exist for certain values of $\alpha$ (e.g., $\alpha = 2$),  due to that the fact that the Gamma function $\Gamma(x)$ diverges when $x$ is an negative integer. 

% Cache Enabled Distributed Mobile Networks

%%%%%%%%%%%%%%%%%%%%%%%
%%%%%%%      Section 1      %%%%%%%%
%%%%%%%%%%%%%%%%%%%%%%%
\section{Content-Delivery Probability} \label{sec: coverage analysis}

In the preceding  section, we derive the results  concerning  the distribution  of a shot-noise ratio. In this section, they are applied to analyze the content-delivery probability. 

\subsection{Two Forms of Content-Delivery Probability} In this sub-section, the conditional delivery probability defined in Section~\ref{sec: Mathematical Model} is derived in two forms. Note that the expectation of the conditional probability with respect to the content-popularity distribution yields the delivery probability. 

First, the conditional delivery probability is converted into the distribution function of a shot-noise ratio so as to leverage the results derived in the preceding section. One can see from \eqref{Eq: conditional_coverage_def cache one}  that the conditional probability is not exactly the distribution of the shot-noise ratio  but one with a weighted sum of $(N-1)$ shot-noise processes in the denominator. However, the problem can be solved by converting the sum  into a single shot-noise process by exploiting the fundamental characteristic of the shot-noise which is proved to be stable. From Definition~\ref{Def: stable distribution}, stable distribution is invariant to linear operations such as scaling and summation. Consequently, the said weighted sum,  corresponding to the total interference power for the typical user, follows the same stable distribution as a single shot-noise process as shown below. 

\begin{lemma}[Normalized Distributions of Signal and Interference] \label{Lemma: scaling_PPP} \emph{The signal and interference power for the typical user, as given in \eqref{eq: def power with fading}, are distributed as scaled versions of shot-noise processes with unit density: 
\begin{align}
S_0(\mathcal{F}_k) &\sim h_k (a_k\lambda )^{\frac{\alpha}{2}}S(\bar{\Phi}_1),\nn\\
\sum_{n\neq k} I_n &\sim \left(\sum_{n \neq k} {h_n}^{\frac{2}{\alpha}}a_n \lambda\right)^{\frac{\alpha}{2}}S(\bar{\Phi}_2)\nn
\end{align}
where $S(\cdot)$ is a shot noise process without fading as defined in \eqref{Eq:Shot:Def}, and $\bar{\Phi}_1$ and $\bar{\Phi}_2$ are two independent homogeneous PPPs with unit density. 
}
\end{lemma}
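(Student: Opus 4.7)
My plan is to establish the two distributional equalities by combining a standard rescaling (mapping) argument for PPPs with the strict stability of the power-law shot noise process established in Proposition~\ref{pro: marked PP is stable distribution} (the $x=0$ case).

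\emph{Step 1: Scaling lemma for shot noise.} First I would prove the auxiliary identity that for a homogeneous PPP $\Phi$ of density $\lambda$,
\begin{equation*}
\sum_{X\in\Phi}|X|^{-\alpha} \;\sim\; \lambda^{\alpha/2}\, S(\bar\Phi),
\end{equation*}
where $\bar\Phi$ is a unit-density PPP. This follows from the mapping theorem applied to the rescaling $X\mapsto \sqrt{\lambda}\,X$, which turns $\Phi$ into a unit-density PPP $\bar\Phi$ and simultaneously transforms the shot noise by the factor $(\sqrt{\lambda})^{-\alpha \cdot (-1)} = \lambda^{\alpha/2}$. Applying this identity to $\Phi_k$, which has density $a_k\lambda$, and multiplying by the independent fading $h_k$, immediately yields $S_0(\mathcal{F}_k) \sim h_k (a_k\lambda)^{\alpha/2} S(\bar\Phi_1)$, giving the first claim.

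\emph{Step 2: Express each $I_n$ in normalized form.} Using the same scaling identity on each of the independent marked processes $\Phi_n$ ($n\neq k$), I can write
\begin{equation*}
I_n \;\sim\; h_n (a_n\lambda)^{\alpha/2}\, \tilde S_n,
\end{equation*}
where $\{\tilde S_n\}_{n\neq k}$ are independent copies of $S(\bar\Phi)$ with $\bar\Phi$ of unit density, also independent of $\{h_n\}$.

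\emph{Step 3: Combine via strict stability.} By Proposition~\ref{pro: marked PP is stable distribution} with $x=0$, the shot noise $S(\bar\Phi)$ is stable of index $\delta = 2/\alpha \in (0,1)$ with zero location parameter, hence strictly stable. The defining property of strict stability then gives, for any positive constants $\{c_n\}$,
\begin{equation*}
\sum_{n\neq k} c_n \tilde S_n \;\sim\; \Bigl(\sum_{n\neq k} c_n^{2/\alpha}\Bigr)^{\alpha/2} S(\bar\Phi_2),
\end{equation*}
by matching characteristic functions of the form $\exp(-\mu_A|t|^{2/\alpha}(\cdots))$. Conditioning on the fading vector $\{h_n\}$ (which is independent of the $\tilde S_n$) and then taking expectations, I apply this with $c_n = h_n(a_n\lambda)^{\alpha/2}$ so that $c_n^{2/\alpha} = h_n^{2/\alpha} a_n\lambda$, which after raising to the $\alpha/2$ power yields exactly the claimed expression $\bigl(\sum_{n\neq k} h_n^{2/\alpha} a_n\lambda\bigr)^{\alpha/2} S(\bar\Phi_2)$.

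\emph{Main obstacle.} The nontrivial point is Step~3: the combination identity requires strict stability (not merely broad-sense stability), so I must explicitly verify that the shot-noise characteristic function from Lemma~\ref{pro: CF marked point process} with $x=0$ has a vanishing linear-in-$t$ location term, which it does because $\delta = 2/\alpha < 1$ and the Form-A representation in Proposition~\ref{pro: marked PP is stable distribution} has location parameter $0$. Everything else is either the mapping theorem or bookkeeping of independence.
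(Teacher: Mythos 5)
Your proposal is correct, and it takes a noticeably different route from the paper's own proof. The appendix proof in the paper does a single transform computation: it writes the Laplace functional of a positively weighted sum $aS(\Phi_1)+bS(\Phi_2)$, applies Campbell's theorem to each factor, and observes that the result coincides with the Laplace function of $\left(a^{\frac{2}{\alpha}}\lambda_1+b^{\frac{2}{\alpha}}\lambda_2\right)^{\frac{\alpha}{2}}S(\bar{\Phi})$ with $\bar{\Phi}$ of unit density; rescaling and combination are thus handled in one stroke, and the extension to the $(N-1)$-term interference sum and the conditioning on the fading are left implicit. You instead split the argument into a mapping-theorem rescaling ($X\mapsto\sqrt{\lambda}X$ gives $S(\Phi)\sim\lambda^{\alpha/2}S(\bar{\Phi})$) followed by a stable-law combination step, matching Form-A characteristic functions of index $\delta=2/\alpha$ with zero location term, i.e.\ you reuse Lemma~\ref{pro: CF marked point process} and Proposition~\ref{pro: marked PP is stable distribution} at $x=0$; this is in fact the stability argument that the main text gestures at just before the lemma, whereas the appendix falls back on a direct Campbell computation. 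Both are sound: the paper's version is shorter and self-contained, while yours buys explicitness on the points the paper glosses over — the strict-stability (zero shift) requirement, the general $(N-1)$-term sum with coefficients $c_n=h_n(a_n\lambda)^{\alpha/2}$, and the conditioning on the fading vector that turns the conditional distributional identity into the stated one. One small caution: the defining relation of strict stability alone does not pin down the scaling constant $\bigl(\sum_{n\neq k}c_n^{2/\alpha}\bigr)^{\alpha/2}$, so the characteristic-function matching you invoke is not optional bookkeeping but the actual substance of Step~3; as written you do carry it out, so the argument is complete.
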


The proof is provided in  Appendix~\ref{app: scaling_PPP}. 

Define a normalized shot-noise ratio, denoted as $\bar{R}$, as one generated by $\bar{\Phi}_1$ and $\bar{\Phi}_2$. It follows from Lemma~\ref{Lemma: scaling_PPP} and \eqref{Eq: conditional_coverage_def cache one} that the conditional delivery probability can be written in terms of $\bar{R}$ as:
\begin{equation}
{P}_{d}(\mathcal{F}_k)= \Pr\l({ \bar{R}} \frac{h_k {a_k}^{\frac{\alpha}{2}}}{{g_k}^{\frac{\alpha}{2}}}>\theta_k\r) \label{eq: redefine coverage probability based on scaling ppp}
\end{equation} 
where the set of random variables $\{g_1, g_2, \cdots, g_N\}$ are defined by $g_k=\sum_{n \neq k} a_n{h_n}^{\frac{2}{\alpha}}$. Using the expression, the conditional delivery probability is derived in two forms as follows. 

The first form, called the \emph{expectation form}, follows from substituting the CCDF of a shot-noise ratio as given in Proposition~\ref{pro: CCDF ratio of shot noise} into \eqref{eq: redefine coverage probability based on scaling ppp}, and the result is obtained as follows. 

\begin{theorem} [Content-Delivery Probability: Expectation Form]\label{Th: Coverage Probability expectation form}\emph{Given that the request by the typical user is $D_0 =\mathcal{F}_k$, the conditional delivery  probability is 
 \begin{align}\label{eq: Coverage Probability expectation form}
{P}_{d}(\mathcal{F}_k)=\E_{h_k,g_k}&\left[\arctan\left(\Bigg(1-\frac{2}{1+\left(\frac{h_k}{\theta_k}\right)^{\frac{2}{\alpha}}\frac{a_k}{g_k}}\Bigg)\tan\frac{\pi}{\alpha}\right)\right]\nn\\
&\qquad \qquad \qquad \qquad \qquad  \times \frac{\alpha}{2\pi}+\frac{1}{2}
\end{align}
and the content-delivery probability is ${P}_{d}=\sum_{k=1}^N a_k {P}_{d}(\mathcal{F}_k)$. 
}
\end{theorem}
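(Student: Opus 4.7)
The plan is to start from \eqref{eq: redefine coverage probability based on scaling ppp}, which via Lemma~\ref{Lemma: scaling_PPP} expresses the conditional content-delivery probability as $P_d(\mathcal{F}_k) = \Pr\bigl(\bar{R}\, h_k a_k^{\alpha/2}/g_k^{\alpha/2} > \theta_k\bigr)$, and then to substitute the closed-form CCDF of a shot-noise ratio from Proposition~\ref{pro: CCDF ratio of shot noise}. The crucial observation is that the normalized shot-noise ratio $\bar{R}$ is generated by two independent unit-density PPPs that are independent of all fading variables $\{h_n\}$, and therefore independent of both $h_k$ and $g_k = \sum_{n\neq k} a_n h_n^{2/\alpha}$.

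Conditioning on $(h_k, g_k)$, the event $\{\bar{R}\, h_k a_k^{\alpha/2}/g_k^{\alpha/2} > \theta_k\}$ is equivalent to $\{\bar{R} > x\}$ with $x = \theta_k g_k^{\alpha/2}/(h_k a_k^{\alpha/2})$. Applying Proposition~\ref{pro: CCDF ratio of shot noise} with $\lambda_1 = \lambda_2 = 1$ gives
\begin{equation*}
\Pr(\bar{R} > x \mid h_k, g_k) = \frac{\alpha}{2\pi}\arctan\!\l(\l(-1 + \frac{2}{1+x^{2/\alpha}}\r)\tan\frac{\pi}{\alpha}\r) + \frac{1}{2}.
\end{equation*}
Since $x^{2/\alpha} = (\theta_k/h_k)^{2/\alpha}\, g_k/a_k$, I would then multiply the numerator and denominator of the inner fraction by $(h_k/\theta_k)^{2/\alpha}\, a_k/g_k$, which rewrites the bracket as $1 - 2/\bigl(1 + (h_k/\theta_k)^{2/\alpha} a_k/g_k\bigr)$, matching the form stated in \eqref{eq: Coverage Probability expectation form}.

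Finally, taking the expectation over $(h_k, g_k)$ yields the claimed expression for $P_d(\mathcal{F}_k)$, and the law of total probability on the typical user's request $D_0$ delivers $P_d = \sum_{k=1}^N a_k P_d(\mathcal{F}_k)$.

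The argument is essentially mechanical once Proposition~\ref{pro: CCDF ratio of shot noise} and Lemma~\ref{Lemma: scaling_PPP} are in hand, so the main obstacle is bookkeeping rather than analysis: one must carefully verify the independence of $\bar{R}$ from the fading marks $(h_k, g_k)$, so that conditioning preserves its CCDF, and one must handle the sign conventions in the arctan argument to arrive at the cleaner form of \eqref{eq: Coverage Probability expectation form}. No further probabilistic ingredients are required.
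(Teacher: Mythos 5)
Your proposal is correct and follows essentially the same route as the paper: starting from \eqref{eq: redefine coverage probability based on scaling ppp} (obtained via Lemma~\ref{Lemma: scaling_PPP}), conditioning on $(h_k,g_k)$, substituting the CCDF of Proposition~\ref{pro: CCDF ratio of shot noise} with unit densities, and averaging over the fading variables. The algebraic rewriting of the arctan argument and the independence of $\bar{R}$ from $(h_k,g_k)$ are exactly the (implicit) steps in the paper's derivation, so nothing is missing.
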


Let the variable  ${h_k}^{\frac{2}{\alpha}} a_k$ be referred to as the \emph{(fading) weighted popularity} for $\mathcal{F}_k$. Then the ratio $\frac{{h_k}^{\frac{2}{\alpha}} a_k}{g_k}$, called the \emph{weighted popularity ratio},   quantifies  the  weighted popularity of $\mathcal{F}_k$ with respect to that for other files.  One can observe from the result in Theorem~\ref{Th: Coverage Probability expectation form} that the conditional delivery probability for a particular file is a \emph{monotone increasing} function of the corresponding weighted popularity ratio since the function $\arctan$ is  monotone increasing. In addition, the conditional probability reduces with an increasing SIR threshold $\theta_k$.

For the special case of $\alpha$, the expression for the conditional content delivery probability can be simplified as follows. 
\begin{corollary}[Content-Delivery Probability for $\alpha=4$] \label{Co: special case:1}
\emph{ For the path-loss exponent $\alpha=4$, the conditional delivery  probability can be simplified as 
\begin{equation}\label{eq: Coverage special case}
{P}_{d}(\mathcal{F}_k)=1-\frac{2}{\pi}\E_{h_k,g_k}\left[\arctan\left(\frac{g_k}{a_k}\sqrt{\frac{\theta_k}{{h_k}}}\right)\right]. 
\end{equation}
}
\end{corollary}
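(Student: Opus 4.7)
The plan is to specialize the expectation form in Theorem~\ref{Th: Coverage Probability expectation form} to $\alpha=4$ and then use two elementary $\arctan$ identities to reach the claimed expression. First I would substitute $\alpha=4$, giving $\tan(\pi/\alpha)=\tan(\pi/4)=1$, $2/\alpha=1/2$, and the outer prefactor $\alpha/(2\pi)=2/\pi$. Writing
\begin{equation*}
u \;=\; \frac{a_k}{g_k}\sqrt{\frac{h_k}{\theta_k}},
\end{equation*}
the argument of the inner $\arctan$ simplifies from $1-\frac{2}{1+u}$ to $\frac{u-1}{u+1}$, so that
\begin{equation*}
P_d(\mathcal{F}_k) \;=\; \frac{2}{\pi}\,\E_{h_k,g_k}\!\left[\arctan\frac{u-1}{u+1}\right] + \frac{1}{2}.
\end{equation*}

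Next I would invoke the tangent subtraction identity $\arctan(u)-\arctan(1)=\arctan\frac{u-1}{1+u}$, which is valid for all $u>0$ (the relevant range since $h_k,a_k,g_k,\theta_k>0$ almost surely). Since $\arctan(1)=\pi/4$, this gives $\arctan\frac{u-1}{u+1}=\arctan(u)-\pi/4$. Plugging this in and using $\tfrac{2}{\pi}\cdot\tfrac{\pi}{4}=\tfrac{1}{2}$ makes the constants $-\tfrac{1}{2}+\tfrac{1}{2}$ cancel, leaving
\begin{equation*}
P_d(\mathcal{F}_k) \;=\; \frac{2}{\pi}\,\E_{h_k,g_k}\!\left[\arctan\!\left(\frac{a_k}{g_k}\sqrt{\frac{h_k}{\theta_k}}\right)\right].
\end{equation*}

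Finally I would apply the reciprocal identity $\arctan(u)=\pi/2-\arctan(1/u)$ for $u>0$, which converts the previous display into
\begin{equation*}
P_d(\mathcal{F}_k) \;=\; 1 - \frac{2}{\pi}\,\E_{h_k,g_k}\!\left[\arctan\!\left(\frac{g_k}{a_k}\sqrt{\frac{\theta_k}{h_k}}\right)\right],
\end{equation*}
which is the claimed expression. There is no real obstacle here: the only thing to check carefully is that the arguments of $\arctan$ stay in the range where each identity applies, which is immediate from positivity of the fading variables and popularity weights. The entire derivation is an algebraic specialization and requires no new probabilistic content beyond Theorem~\ref{Th: Coverage Probability expectation form}.
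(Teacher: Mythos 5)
Your proposal is correct and follows essentially the same route as the paper: specialize Theorem~\ref{Th: Coverage Probability expectation form} to $\alpha=4$ (so $\tan\frac{\pi}{\alpha}=1$) and rewrite $\arctan\frac{u-1}{u+1}$ via the arctangent difference identity, which is exactly how the paper's appendix proceeds (there written as $\arctan(\tan(\frac{\pi}{4}-\arctan(\cdot)))$). Your extra step invoking $\arctan(u)=\frac{\pi}{2}-\arctan(1/u)$ is just a minor reorganization of the same algebra, and your positivity checks for the identities' validity are correct.
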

The proof is provided in Appendix~\ref{app: special case:1}.  

Next, an alternative form of the conditional delivery probability can be obtained by using the Laplace function of a shot-noise ratio in the series form as given in Proposition~\ref{pro: LT ratio of shot noise process}. To this end, it follows from  \eqref{eq: redefine coverage probability based on scaling ppp} and with the fact that $h_k$ is an exponential r.v., the conditional delivery probability can be written as 
\begin{align}
{P}_{d}(\mathcal{F}_k)&=\E_{g_k}\left[e^{-\theta_k\l(\frac{g_k}{a_k}\r)^{\frac{\alpha}{2}}{\bar{R}}^{-1}}\right] \nn\\
&=\E_{g_k}\left[\mathcal{L}_{{\bar{R}}^{-1}}\left(\theta_k\l(\frac{g_k}{a_k}\r)^{\frac{\alpha}{2}}\right)\right]. \nn 
\end{align}
Substituting the result in Proposition~\ref{pro: LT ratio of shot noise process} yields 
\begin{equation}
{P}_{d}(\mathcal{F}_k) =\E_{g_k}\left[\sum_{m=1}^{\infty}\frac{(-1)^{m+1}}{\Gamma(1-m\frac{2}{\alpha})} \l(\frac{a_k}{{\theta_k}^{\frac{2}{\alpha}}g_k}\r)^m \right]. 
\end{equation}
Since ${P}_{d}(\mathcal{F}_k)$ converges, interchanging  the order of summation and expectation in the above expression is allowed  according to Fubini's theorem, leading to the following result. 

\begin{theorem} [Content-Delivery Probability: Series Form]\label{Th: Coverage Probability}\emph{ 
Given that the request by the typical user is $D_0 =\mathcal{F}_k$, the conditional delivery  probability is 
\begin{align} \label{Eq: coverage probability}
{P}_{d}(\mathcal{F}_k)=& \sum_{m=1}^{\infty}\frac{(-1)^{m+1}}{\Gamma(1-m\frac{2}{\alpha})}\E\left[{g_k}^{-m}\right]\l(\frac{a_k}{{\theta_k}^{\frac{2}{\alpha}}}\r)^m 
\end{align}
and the content-delivery probability is ${P}_{d}=\sum_{k=1}^N a_k {P}_{d}(\mathcal{F}_k)$. 
}
\end{theorem}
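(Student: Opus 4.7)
The strategy is essentially dictated by the derivation sketched in the two lines immediately preceding the theorem statement, and my plan is to flesh out those lines into a rigorous argument. Starting from the reformulation in \eqref{eq: redefine coverage probability based on scaling ppp}, I would first condition on $g_k$ and $\bar{R}$ and integrate out the exponentially distributed fading coefficient $h_k$: since $\Pr(h_k > t \mid \bar{R}, g_k) = e^{-t}$ for $t\geq 0$, this gives
\begin{equation*}
P_d(\mathcal{F}_k) = \E_{g_k,\bar{R}}\!\left[e^{-\theta_k (g_k/a_k)^{\alpha/2}\bar{R}^{-1}}\right] = \E_{g_k}\!\left[\mathcal{L}_{\bar{R}^{-1}}\!\left(\theta_k (g_k/a_k)^{\alpha/2}\right)\right].
\end{equation*}

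Second, I would connect this to Proposition~\ref{pro: LT ratio of shot noise process}. Since $\bar{R} = S(\bar{\Phi}_1)/S(\bar{\Phi}_2)$ with $\bar\Phi_1,\bar\Phi_2$ independent homogeneous PPPs of unit density, the reciprocal $\bar{R}^{-1} = S(\bar{\Phi}_2)/S(\bar{\Phi}_1)$ is equal in distribution to a shot-noise ratio generated by two independent unit-density PPPs with the roles of numerator and denominator swapped; by symmetry of densities this is just $R(1,1)$ again. Hence I can apply Proposition~\ref{pro: LT ratio of shot noise process} with $\lambda_1 = \lambda_2 = 1$ to obtain
\begin{equation*}
\mathcal{L}_{\bar{R}^{-1}}(s) = \sum_{m=1}^{\infty}\frac{(-1)^{m+1}}{\Gamma(1 - m\tfrac{2}{\alpha})}\, s^{-2m/\alpha},
\end{equation*}
and substituting $s = \theta_k (g_k/a_k)^{\alpha/2}$ collapses the powers $s^{-2m/\alpha}$ into $(a_k/g_k)^m\theta_k^{-2m/\alpha}$, producing the integrand that appears inside the expectation in \eqref{Eq: coverage probability}.

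Third, I would push the expectation $\E_{g_k}[\cdot]$ inside the summation. This is the step that requires care and is the part I would expect to be the main technical obstacle, because the series in Proposition~\ref{pro: LT ratio of shot noise process} is only conditionally convergent (its terms are alternating and grow in absolute value for small argument), so Fubini/Tonelli cannot be invoked by absolute convergence of the double series as written. I would handle this by appealing to the same justification the paper already cites: ${P}_{d}(\mathcal{F}_k)$ is bounded above by $1$ and therefore the iterated expression converges, and the series representation of the PDF of a shot-noise process from \cite{PowerLawShotNoise} (used in proving Proposition~\ref{pro: LT ratio of shot noise process}) applies pointwise on the support of $g_k$. Combined with the fact that the moments $\E[g_k^{-m}]$ are finite for each $m$ (since $g_k = \sum_{n\neq k} a_n h_n^{2/\alpha}$ is a finite sum of positive r.v.s bounded below in expectation by its leading term), Fubini's theorem in its dominated form can then be applied to the resulting alternating series to swap summation and expectation.

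Finally, unconditioning on $D_0$ by averaging $P_d(\mathcal{F}_k)$ against the popularity weights $\{a_k\}$ yields ${P}_d = \sum_{k=1}^N a_k P_d(\mathcal{F}_k)$, completing the proof.
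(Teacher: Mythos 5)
Your proposal follows the paper's own derivation essentially line by line: integrate out the exponential $h_k$ in \eqref{eq: redefine coverage probability based on scaling ppp} to get $\E_{g_k}\left[\mathcal{L}_{\bar{R}^{-1}}\left(\theta_k\left(g_k/a_k\right)^{\alpha/2}\right)\right]$, apply Proposition~\ref{pro: LT ratio of shot noise process} with unit densities (the paper leaves the observation $\bar{R}^{-1}\sim R(1,1)$ implicit), then swap expectation and summation by Fubini and average over the popularity distribution. The one place where you go beyond the paper is also the one inaccuracy: the blanket claim that $\E\left[g_k^{-m}\right]$ is finite for every $m$ is false in general, because $g_k=\sum_{n\neq k}a_n h_n^{2/\alpha}$ has density vanishing only like $x^{(N-1)\alpha/2-1}$ near zero, so negative moments of order $m\geq (N-1)\alpha/2$ diverge; being ``a finite sum of positive r.v.s'' constrains positive moments, not negative ones. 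The paper itself does not attempt this justification --- it simply invokes Fubini on the grounds that ${P}_{d}(\mathcal{F}_k)$ converges and defers the finiteness/convergence issue to the remark immediately following the theorem --- so your argument is no stronger than the paper's on that step, and you should drop or qualify that particular claim rather than lean on it.
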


To relate the results in the current theorem and Theorem~\ref{Th: Coverage Probability expectation form}, consider the case where the SIR threshold $\theta_k$ is large. As a result, the conditional delivery probability in Theorem~\ref{Th: Coverage Probability} can be approximated as {
\begin{align} 
{P}_{d}(\mathcal{F}_k)&\approx  \frac{a_k}{{\theta_k}^{\frac{2}{\alpha}}\Gamma(1-\frac{2}{\alpha})}\E\l[\frac{1}{g_k}\r], \qquad \theta_k \gg 1\nn\\
&\geq \frac{a_k}{{\theta_k}^{\frac{2}{\alpha}}\Gamma(1-\frac{2}{\alpha})}\cdot \frac{1}{\E[g_k]}\label{eq: jensen} \\
&= \frac{a_k}{{\theta_k}^{\frac{2}{\alpha}}\Gamma(1-\frac{2}{\alpha})}\cdot\frac{1}{\sum_{n \neq k} a_n\E\l[{h_n}^{\frac{2}{\alpha}}\r]}  \nn\\
&=  \frac{\sin\l(\pi\frac{2}{\alpha}\r)}{\pi \frac{2}{\alpha} }{\theta_k}^{-\frac{2}{\alpha}}\frac{a_k}{1 - a_k } \label{eq:gamma formula}
\end{align}
where \eqref{eq: jensen} uses Jensen's inequality and \eqref{eq:gamma formula} is obtained using the formula  $\Gamma(1-z)\Gamma(z)=\frac{\pi}{\sin(\pi z)}$ for non-integer $z$.}  It can be observed that ${P}_{d}(\mathcal{F}_k)$ linearly increases with the growing ratio $\frac{a_k}{1 - a_k }$ which is similar to the weighted popularity ratio in the remark on Theorem~\ref{Th: Coverage Probability expectation form}. 

\begin{remark}[Convergence Conditions]\emph{ The convergence of the series in \eqref{Eq: coverage probability} requires that each summation term is finite. Sufficient conditions for convergence can be derived under the following constraint: 
\begin{equation}
\l(\frac{\E\left[{g_k}^{-m}\right]}{\Gamma(1-m\frac{2}{\alpha})}\r)^{\frac{1}{m}}\frac{a_k}{{\theta_k}^{\frac{2}{\alpha}}} < 1, \qquad \forall m. 
\end{equation}
The details are tedious and omitted for brevity. }
\end{remark}

\subsection{Bounding  the  Content-Delivery Probability}

Bounds on the conditional-delivery probabilities whose expressions are simpler than the exact ones are derived in this sub-section. The main idea is to consider the following two modified forms of the SIR in \eqref{Eq:SIR:Modified}, denoted as $\SIR'$ and $\SIR''$, and study their distributions: 
\begin{align}
\SIR'(\mathcal{F}_k) &=\frac{h_k \sum_{X \in \Phi_k }\left| X\right| ^{-\alpha}}{\sum_{z\in \Phi \backslash \Phi_k}h_z \left|Z\right|^{-\alpha}}, \nn\\
 \SIR''(\mathcal{F}_k) &=\frac{h_k \sum_{X \in \Phi_k }\left| X\right| ^{-\alpha}}{h_{k'}\sum_{z\in \Phi \backslash \Phi_k} \left|Z\right|^{-\alpha}}. \label{Eq:SIR:Modified}
\end{align}
They differ from $\SIR$ in \eqref{Eq: conditional_coverage_def cache one}  in the locations of fading coefficients. One can interpret $\SIR'$ and $\SIR''$  as the SIRs corresponding  to two artificial cases: the former  without MAC alignment in the interference and the latter having uniform MAC and an identical transmitted file through out all interference.  Their distributions are related to those of the actual SIR as shown below. 

\begin{lemma}[Bounding the SIR]\label{lemma: Relationships of different forms shot-noise ratio}\emph{The distribution of the SIR in \eqref{Eq: conditional_coverage_def cache one}  can be bounded as 
\begin{align}
\Pr(\SIR(\mathcal{F}_k) \geq \theta_k)&\geq\Pr(\SIR'(\mathcal{F}_k)>\theta_k),\nn\\
\Pr(\SIR(\mathcal{F}_k) \geq \theta_k)&\leq \Pr(\SIR''(\mathcal{F}_k) \geq \theta_k), \quad  \forall\ k 
\end{align}
where $\SIR'$ and $\SIR''$ are defined in \eqref{Eq:SIR:Modified}. 
}
\end{lemma}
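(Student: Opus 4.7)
The plan is to condition on the point process $\Phi$, exploit the exponential distribution of the numerator fading coefficient $h_k$, and then apply the elementary inequality
\begin{equation*}
\prod_{i}(1+a_i) \;\geq\; 1+\sum_{i}a_i, \qquad a_i\geq 0,
\end{equation*}
twice in two different groupings to extract the two bounds. All three SIR expressions share the same numerator $h_k\Lambda_k$ with $\Lambda_k\triangleq\sum_{X\in\Phi_k}|X|^{-\alpha}$ and differ only in how the fading coefficients are distributed across the interference points. Since $h_k$ is Exp(1) and independent of everything else, for any denominator $D$ independent of $h_k$ one has $\Pr(h_k\Lambda_k\geq\theta_k D\mid\Phi,D)=e^{-\theta_k D/\Lambda_k}$.

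Writing $T_n\triangleq\sum_{Z\in\Phi_n}|Z|^{-\alpha}$ and taking an expectation over the interference fading with $\Phi$ held fixed, repeated application of $\mathsf{E}[e^{-sh}]=1/(1+s)$ for an Exp(1) variable $h$ yields
\begin{align*}
\Pr(\SIR(\mathcal{F}_k)\geq\theta_k\mid\Phi) &= \prod_{n\neq k}\frac{1}{1+\theta_k T_n/\Lambda_k},\\
\Pr(\SIR''(\mathcal{F}_k)\geq\theta_k\mid\Phi) &= \frac{1}{1+\theta_k\sum_{n\neq k}T_n/\Lambda_k},\\
\Pr(\SIR'(\mathcal{F}_k)>\theta_k\mid\Phi) &= \prod_{Z\in\Phi\setminus\Phi_k}\frac{1}{1+\theta_k|Z|^{-\alpha}/\Lambda_k},
\end{align*}
where I use the independence of $\{h_n\}_{n\neq k}$, $h_{k'}$, and $\{h_Z\}_{Z\in\Phi\setminus\Phi_k}$ from $\Phi$.

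The upper bound then follows by applying the product inequality with $a_n=\theta_k T_n/\Lambda_k$: since $\prod_{n\neq k}(1+a_n)\geq 1+\sum_{n\neq k}a_n$, inverting gives $\Pr(\SIR\geq\theta_k\mid\Phi)\leq\Pr(\SIR''\geq\theta_k\mid\Phi)$, and an outer expectation over $\Phi$ preserves the inequality. For the lower bound, I would apply the same inequality \emph{within} each cluster $\Phi_n$ with $a_Z=\theta_k|Z|^{-\alpha}/\Lambda_k$, obtaining $\prod_{Z\in\Phi_n}(1+a_Z)\geq 1+\theta_k T_n/\Lambda_k$, then multiply across $n\neq k$ and invert to reach $\Pr(\SIR'>\theta_k\mid\Phi)\leq\Pr(\SIR\geq\theta_k\mid\Phi)$, again averaging over $\Phi$.

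The argument is purely algebraic once the conditional product forms are written down, so there is no substantive analytical obstacle. The only point to verify is that the mixture of strict and non-strict inequalities in the statement is harmless: since $\SIR(\mathcal{F}_k)$ has a continuous distribution (Rayleigh fading plus a PPP), $\Pr(\SIR=\theta_k)=0$ and one may freely interchange $\geq$ with $>$.
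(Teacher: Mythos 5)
Your proposal is correct and follows essentially the same route as the paper's proof: condition on $\Phi$, use the exponential fading to get the three conditional probabilities in the product forms you wrote, apply $\prod_i(1+a_i)\geq 1+\sum_i a_i$ once within each cluster $\Phi_n$ (for the $\SIR'$ lower bound) and once across the index $n\neq k$ (for the $\SIR''$ upper bound), and then decondition. Your added remark that strict versus non-strict inequalities are interchangeable because the SIR distribution is continuous is a minor point the paper glosses over, but it does not change the argument.
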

The proof is given in  Appendix~\ref{app: Relationships of different forms shot-noise ratio}.
The expressions for the bounds in Lemma~\ref{Lem:Mod:SIR} can be obtained as shown below. 

\begin{lemma}[Modified SIRs] \label{Lem:Mod:SIR}
\emph{The distributions of the modified SIRs in \eqref{Eq:SIR:Modified} are given as 
\begin{align} 
\Pr(\SIR'(\mathcal{F}_k) > x) =&  \E_{h_k}\left[\arctan\left(\Big(1-\frac{2}{1+\eta_k h_k^{\frac{2}{\alpha}}}\Big)\tan\frac{\pi}{\alpha}\right)\right]\nn\\
 &\times\frac{\alpha}{2\pi}+\frac{1}{2}\nn\\
\Pr(\SIR''(\mathcal{F}_k) >x) =& \frac{1}{1+x^{\frac{2}{\alpha}} \l(\frac{1}{a_k}-1\r)}\nn
\end{align} 
where the constant  $\eta_k=\frac{a_k}{(1-a_k)\Gamma(1+\frac{2}{\alpha})x^{\frac{2}{\alpha}}}$.
}
\end{lemma}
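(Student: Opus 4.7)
The plan is to handle the two bounds separately, each time reducing the modified SIR to a form where the shot-noise-ratio results of Section~\ref{sec: Distribution of a shot-noise ratio} can be applied.

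For $\SIR'(\mathcal{F}_k)$, the numerator is a single Rayleigh coefficient $h_k$ multiplying a fading-free shot noise on $\Phi_k$, while the denominator is a shot noise with independent Rayleigh marks on the thinned PPP $\Phi\setminus\Phi_k$ of density $(1-a_k)\lambda$. Since Proposition~\ref{pro: CCDF ratio of shot noise} is stated for fading-free shot noises, the first step is to absorb the interference fading into the density. I would do this through the Laplace transform: apply Campbell's theorem to the marked PPP and reduce the resulting radial integral via the substitution $u=r^{\alpha}/s$ together with the identity $\int_0^\infty u^{a-1}/(u+1)\,du=\pi/\sin(\pi a)$, obtaining the same transform as a fading-free shot noise on a PPP of density $(1-a_k)\lambda\,\Gamma(1+2/\alpha)$. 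Because both distributions are positive stable of index $2/\alpha$, uniqueness of the Laplace transform on the positive half-line promotes this to equality in distribution. Conditioning on $h_k$, the ratio then becomes $h_k$ times $R(a_k\lambda,(1-a_k)\lambda\,\Gamma(1+2/\alpha))$ evaluated at the threshold $x/h_k$, and Proposition~\ref{pro: CCDF ratio of shot noise} produces an $\arctan$ expression. Substituting $\eta_k=a_k/\bigl((1-a_k)\Gamma(1+2/\alpha)x^{2/\alpha}\bigr)$ simplifies the inner argument into the form $1-2/(1+\eta_k h_k^{2/\alpha})$, and a final expectation over $h_k$ gives the first claim.

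For $\SIR''(\mathcal{F}_k)$, both fading coefficients appear as scalar multipliers, so I would integrate them out first. Setting $A=\sum_{X\in\Phi_k}|X|^{-\alpha}$ and $B=\sum_{Z\in\Phi\setminus\Phi_k}|Z|^{-\alpha}$, which are independent since the underlying PPPs are disjoint, the CCDF of the exponential $h_k$ converts $\Pr(h_k A > x h_{k'} B)$ into $\E[e^{-x h_{k'} B/A}]$, and a further Laplace-transform step on $h_{k'}$ reduces this to $\E[A/(A+xB)]$. To evaluate the latter, I would write $A/(A+xB)=\int_0^\infty A\,e^{-s(A+xB)}\,ds$, swap expectation and integration by Fubini (everything is positive), and insert the fading-free shot-noise Laplace transforms $\E[e^{-sA}]=\exp(-a_k\lambda\pi\Gamma(1-2/\alpha)s^{2/\alpha})$ and the analogue for $B$, generating the factor $A$ by differentiating the first transform in $s$. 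The substitution $u=\bigl(a_k+(1-a_k)x^{2/\alpha}\bigr)\lambda\pi\Gamma(1-2/\alpha)s^{2/\alpha}$ then collapses the integral to $\int_0^\infty e^{-u}\,du=1$, the common prefactor $\lambda\pi\Gamma(1-2/\alpha)$ cancels, and one is left with $a_k/(a_k+(1-a_k)x^{2/\alpha})$, which is exactly the stated form.

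The step I expect to be the main obstacle is the fading-absorption for $\SIR'$: the identification of a Rayleigh-faded shot noise with a fading-free shot noise on a rescaled PPP, with the explicit density correction $\Gamma(1+2/\alpha)$, requires careful handling of the Campbell integral and an appeal to the uniqueness of the Laplace transform for positive stable distributions. Once this reduction is in place, everything else reduces to routine algebra on top of Proposition~\ref{pro: CCDF ratio of shot noise} and the Laplace transforms of fading-free shot noises.
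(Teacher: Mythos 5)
Your proof is correct; the two halves deserve separate comments. For $\SIR'$ your argument is essentially the paper's own (its Lemma~\ref{lemma: Effect of interference fading}): there, too, one conditions on $h_k$, applies Campbell's theorem to the exponentially marked interference, and matches the result against the fading-free ratio — the paper absorbs the Rayleigh marks into a threshold rescaling by $\Gamma(1+\frac{2}{\alpha})^{\alpha/2}$, which is exactly equivalent to your density rescaling of $\lambda_2$ by $\Gamma(1+\frac{2}{\alpha})$, because the CCDF of Proposition~\ref{pro: CCDF ratio of shot noise} depends on the densities only through $\frac{\lambda_2}{\lambda_1}x^{\frac{2}{\alpha}}$; your $\eta_k$ bookkeeping then reproduces the stated form. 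For $\SIR''$ you take a genuinely different route: the paper writes $\Pr(\SIR''>x)=\E_{h_{k'}}\bigl[\mathcal{L}_{R(1-a_k,\,a_k)}(x h_{k'})\bigr]$, inserts the series of Proposition~\ref{pro: LT ratio of shot noise process}, takes the expectation over $h_{k'}$ term by term (the resulting $\Gamma(1-m\frac{2}{\alpha})$ factors cancel) and sums a geometric series, with a separate symmetric argument when $\frac{a_k}{1-a_k}x^{-\frac{2}{\alpha}}>1$; you instead integrate out both fading coefficients first, reduce to $\E\bigl[A/(A+xB)\bigr]$ with $A,B$ the two fading-free shot noises, and evaluate this via $\int_0^\infty \E\bigl[Ae^{-sA}\bigr]\E\bigl[e^{-sxB}\bigr]\,ds$ using the closed-form Laplace transforms, where your substitution indeed collapses the integral to $a_k/\bigl(a_k+(1-a_k)x^{\frac{2}{\alpha}}\bigr)$. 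Your route buys rigor and economy: it avoids the series expansion, the case split, and the delicate termwise interchange — note that the moments $\E\bigl[h_{k'}^{-m\frac{2}{\alpha}}\bigr]$ appearing in the paper's computation actually diverge once $m\ge \alpha/2$, so that step is only formal — whereas the paper's route has the virtue of reusing Proposition~\ref{pro: LT ratio of shot noise process} and staying entirely within the shot-noise-ratio toolbox. One cosmetic remark: the independence of $A$ and $B$ follows from the independent marking/thinning of $\Phi$ into $\Phi_1,\dots,\Phi_N$, not merely from the disjointness of $\Phi_k$ and $\Phi\setminus\Phi_k$.
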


The proof is provided in Appendix~\ref{app:Mod:SIR}. Combining Lemmas~\ref{lemma: Relationships of different forms shot-noise ratio} and~\ref{Lem:Mod:SIR} yields the following main result of this sub-section.

\begin{theorem}[Bounding Content-Delivery Probability] \label{Theorem: bounding coverage prob}
\emph{ Using the result from Lemma~\ref{lemma: Relationships of different forms shot-noise ratio}, conditional delivery probability is bounded by 
\begin{align} \label{eq: bounds for general case}
{P}_{d}(\mathcal{F}_k)&\geq\frac{1}{2}+\frac{\alpha}{2\pi}\E_{h_k}\left[\arctan\left(\Big(1-\frac{2}{1+\eta_k h_k^{\frac{2}{\alpha}}}\Big)\tan\frac{\pi}{\alpha}\right)\right], \nn\\
{P}_{d}(\mathcal{F}_k)&\leq \frac{1}{1+\theta_k^{\frac{2}{\alpha}} \left(\frac{1}{a_k}-1\right)}
\end{align} 
where the constant $\eta_k=\frac{a_k}{(1-a_k)\Gamma(1+\frac{2}{\alpha}){\theta_k}^{\frac{2}{\alpha}}}$.
}
\end{theorem}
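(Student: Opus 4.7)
The plan is essentially a one-line synthesis of the two lemmas that immediately precede this theorem, so the substantive work lies in confirming that the pieces fit together with the correct parameter identifications. I would first set $x=\theta_k$ in the expressions provided by Lemma \emph{Modified SIRs}, noting that for continuous distributions $\Pr(\SIR''(\mathcal{F}_k)>\theta_k)$ and $\Pr(\SIR''(\mathcal{F}_k)\geq \theta_k)$ coincide, and verify that the constant $\eta_k$ in the lemma, with $x$ replaced by $\theta_k$, matches the $\eta_k$ in the theorem statement.

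Next I would invoke Lemma \emph{Relationships of different forms shot-noise ratio}, which produces the stochastic sandwich
\begin{equation*}
\Pr(\SIR'(\mathcal{F}_k)>\theta_k)\;\leq\;\Pr(\SIR(\mathcal{F}_k)\geq \theta_k)\;\leq\;\Pr(\SIR''(\mathcal{F}_k)\geq \theta_k).
\end{equation*}
Since ${P}_{d}(\mathcal{F}_k)=\Pr(\SIR(\mathcal{F}_k)\geq \theta_k)$ by definition in \eqref{Eq: conditional_coverage_def cache one}, substituting the closed-form expressions from Lemma \emph{Modified SIRs} into the outer two terms of the sandwich yields exactly the lower and upper bounds displayed in \eqref{eq: bounds for general case}.

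There is essentially no genuine obstacle in this theorem itself; the entire intellectual content has been offloaded to the two preceding lemmas. If anything, the place where one must be careful is the parameter bookkeeping: confirming that the arctan-expression for $\Pr(\SIR'>\theta_k)$ really is a \emph{lower} bound (and not an upper bound) requires checking the direction of the coupling used in the proof of Lemma \emph{Relationships of different forms shot-noise ratio}, specifically that moving the per-interferer fading coefficients $\{h_z\}$ inside the inner sums can only \emph{increase} the interference in the convex/stochastic order and hence decrease the SIR-exceedance probability, while consolidating them into a single $h_{k'}$ out front acts in the opposite direction. Once those monotonicity directions are verified, the theorem follows by direct substitution with no additional calculation.
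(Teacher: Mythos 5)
Your proposal is correct and coincides with the paper's own argument: the paper proves this theorem by exactly the same one-step synthesis, substituting $x=\theta_k$ into the closed forms of Lemma~\ref{Lem:Mod:SIR} and sandwiching ${P}_{d}(\mathcal{F}_k)$ via Lemma~\ref{lemma: Relationships of different forms shot-noise ratio}. Your added check of the inequality directions is sound but already settled inside the proof of Lemma~\ref{lemma: Relationships of different forms shot-noise ratio}, so no further work is needed.
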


\begin{remark} [Skewed Popularity Distribution]\label{Re:Skew}\emph{
For a sanity check, one can see that for a singularly popular file with $a_k \rightarrow 1$, both upper and lower bounds on the conditional delivery probability grow with $a_k$ and converge to $1$. Correspondingly, for a highly skewed popularity distribution, CAMAC effectively turns the network into one mostly broadcasting a single popular file, for which interference is negligible.} 
\end{remark}

\begin{remark} [Effect of Path-Loss Exponent]\label{Re:pathloss}\emph{
In radio access networks without multi-cell cooperation, increasing the path-loss exponent reduces the number of significant interferers and thereby improving link reliability measured by e.g., outage probability (see e.g., \cite{andrews2011tractable}). Consequently, the path-loss exponent has a significant effect on network performance. In contrast, for the current content-delivery network with CAMAC,  there exists multiple serving helpers and interferers for each user.  Increasing the path-loss exponent reduces simultaneously the numbers of helpers and interferers. As the result, the effect of path-loss exponent on network performance is not significant. As an example, given $\theta = 2$, the relevant factor $\theta^{\frac{2}{\alpha}}$ in the upper bound on the content-delivery probability in \eqref{eq: bounds for general case} is 1.59 for $\alpha = 3$ and a similar value of 1.41 for $\alpha = 4$. The fact is confirmed by simulation in the sequel.} 
\end{remark}

Last, following  Corollary~\ref{Co: special case:2}, the bounds in Theorem~\ref{Theorem: bounding coverage prob} can be further simplified for the special case of $\alpha = 4$ as follows. 
\begin{corollary}[Bounding Content-Delivery Probability for $\alpha=4$] \label{Co: special case:2}
\emph{ For the path-loss exponent $\alpha=4$, the conditional delivery  probability can be bounded as follows. 
\begin{itemize}
\item Upper bound: 
\begin{equation}
{P}_{d}(\mathcal{F}_k) \leq\frac{1}{1+\sqrt{\theta_k} \l(\frac{1}{a_k}-1\r)}. 
\end{equation}
\item Lower  bound A: 
\begin{equation}
{P}_{d}(\mathcal{F}_k) \geq \frac{1}{\sqrt{\pi}}e^{\zeta_k} \Gamma\left(\frac{1}{2}, \zeta_k\right)\nn
\end{equation}
where the constant $\zeta_k = \frac{\pi\theta_k}{4}\left(\frac{1-a_k}{a_k}\right)^2$. 
\item Lower  bound B: 
\begin{equation}
{P}_{d}(\mathcal{F}_k) \geq 1-\frac{2}{\pi}\arctan\left(\frac{\pi\sqrt{\theta_k}}{2}\Big(\frac{1}{a_k}-1\Big)\right). \label{eq: concise LB}\nn
\end{equation}
\end{itemize}
}
\end{corollary}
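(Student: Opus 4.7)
The plan is to specialize Theorem~\ref{Theorem: bounding coverage prob} to $\alpha=4$ and then simplify the resulting expression in three different ways to obtain the three stated bounds. The upper bound follows by direct substitution: with $\alpha=4$, $\theta_k^{2/\alpha}=\sqrt{\theta_k}$, and the theorem statement immediately gives $P_d(\mathcal{F}_k)\le 1/[1+\sqrt{\theta_k}(1/a_k-1)]$. The bulk of the work lies in transforming the lower bound of Theorem~\ref{Theorem: bounding coverage prob} into a form suitable for further bounding.

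First, I would plug in $\tan(\pi/4)=1$, $\alpha/(2\pi)=2/\pi$, $h_k^{2/\alpha}=\sqrt{h_k}$, and $\eta_k=2a_k/[\sqrt{\pi\theta_k}(1-a_k)]$ (using $\Gamma(3/2)=\sqrt{\pi}/2$). This rewrites the lower bound as
\begin{equation*}
P_d(\mathcal{F}_k)\ge \tfrac{1}{2}+\tfrac{2}{\pi}\E_{h_k}\!\left[\arctan\!\left(\tfrac{\eta_k\sqrt{h_k}-1}{\eta_k\sqrt{h_k}+1}\right)\right],
\end{equation*}
where I used $1-2/(1+y)=(y-1)/(y+1)$. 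Applying the arctangent subtraction identity $\arctan((y-1)/(y+1))=\arctan(y)-\pi/4$ (valid for $y>0$) collapses the $1/2$ term and yields $(2/\pi)\E[\arctan(\eta_k\sqrt{h_k})]$. Finally, using $\arctan(y)+\arctan(1/y)=\pi/2$ for $y>0$ rewrites the lower bound in the convenient form
\begin{equation*}
P_d(\mathcal{F}_k)\ge 1-\tfrac{2}{\pi}\E_{h_k}\!\left[\arctan(c/\sqrt{h_k})\right],\qquad c\bydef 1/\eta_k=\tfrac{\sqrt{\pi\theta_k}(1-a_k)}{2a_k}.
\end{equation*}

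For Lower~Bound~B, I would apply Jensen's inequality to the arctangent, which is concave on $[0,\infty)$. Since $h_k\sim\mathrm{Exp}(1)$ gives $\E[h_k^{-1/2}]=\Gamma(1/2)=\sqrt{\pi}$, one gets $\E[\arctan(c/\sqrt{h_k})]\le\arctan(c\sqrt{\pi})$, and $c\sqrt{\pi}=(\pi\sqrt{\theta_k}/2)(1/a_k-1)$ reproduces the claimed expression. For Lower~Bound~A, I would compute the expectation exactly. The approach is integration by parts against $e^{-h}dh$: the boundary term contributes $\pi/2$, producing $\E[\arctan(c/\sqrt{h_k})]=\pi/2-(c/2)\int_0^\infty e^{-h}/[\sqrt{h}(h+c^2)]\,dh$. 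I would then use the Laplace representation $1/(h+c^2)=\int_0^\infty e^{-(h+c^2)s}ds$, invoke Fubini to swap the order of integration, and evaluate the inner Gaussian-type integral as $\int_0^\infty h^{-1/2}e^{-h(1+s)}dh=\sqrt{\pi/(1+s)}$. The change of variables $t=c^2(1+s)$ converts the remaining integral into $e^{c^2}\int_{c^2}^\infty t^{-1/2}e^{-t}dt = e^{c^2}\Gamma(1/2,c^2)$, so that $1-(2/\pi)\E[\arctan(c/\sqrt{h_k})]=e^{c^2}\Gamma(1/2,c^2)/\sqrt{\pi}$. Since $c^2=\pi\theta_k(1-a_k)^2/(4a_k^2)=\zeta_k$, this matches Lower~Bound~A exactly.

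The main obstacle is the closed-form evaluation underlying Lower~Bound~A: one must recognize that combining integration by parts with the exponential representation of $1/(h+c^2)$ reduces the expectation to a single incomplete-gamma integral after the variable change $t=c^2(1+s)$. The preparatory arctangent simplification and the Jensen argument for Lower~Bound~B are then routine once the expectation is in the $\arctan(c/\sqrt{h_k})$ form.
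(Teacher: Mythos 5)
Your proposal is correct. For Lower Bound A it is essentially the paper's own route: reduce the Theorem~\ref{Theorem: bounding coverage prob} lower bound at $\alpha=4$ (via the arctangent identities you state, which are valid since $\eta_k\sqrt{h_k}>0$) to $1-\frac{2}{\pi}\E_{h_k}\big[\arctan\big(\sqrt{\zeta_k/h_k}\big)\big]$, integrate by parts against $e^{-h}dh$, and identify the remaining integral with $e^{\zeta_k}\Gamma(\tfrac12,\zeta_k)$; the only difference is that the paper cites a table formula from \cite{jeffrey2007table} at this last step, whereas you re-derive it through the Laplace representation of $1/(h+c^2)$ plus Fubini/Tonelli, which is a harmless and more self-contained variant. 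Two mild divergences elsewhere: for the upper bound you simply set $\alpha=4$ in Theorem~\ref{Theorem: bounding coverage prob}, which indeed suffices, while the paper's appendix re-derives the same expression by integrating the CCDF of the normalized shot-noise ratio against the law of $\theta_k\big(\tfrac{1-a_k}{a_k}\big)^2\bar R^{-1}$ (an equivalent computation, not a shorter one); and for Lower Bound B you apply Jensen to the one-variable expectation $\E_{h_k}[\arctan(c/\sqrt{h_k})]$ inherited from the $\SIR'$-based lower bound, whereas the paper applies Jensen jointly in $(h_k,g_k)$ to the exact expression of Corollary~\ref{Co: special case:1}, $1-\frac{2}{\pi}\E_{h_k,g_k}\big[\arctan\big(\tfrac{g_k}{a_k}\sqrt{\theta_k/h_k}\big)\big]$. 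Both Jensen applications go in the right direction and, since $\E[g_k]=\frac{\sqrt{\pi}}{2}(1-a_k)$ and $\E[h_k^{-1/2}]=\sqrt{\pi}$, they yield the identical final formula; your version bounds a lower bound while the paper bounds the exact probability, but either argument establishes the stated inequality.
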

The proof is provided in Appendix~\ref{app: special case:2}.

\section{Spatial Alignment Gain}\label{sec: discussion}
In this section, we attempt to quantify the network-performance gain of CAMAC with respect to the conventional case without using the algorithm. The (MAC) \emph{spatial alignment gain}, denoted as $G_{\text{align}}$, is defined as the ratio of content-delivery probabilities  between   the current and the mentioned conventional case. 

Consider the case with CAMAC. Simulation results (see Fig.~\ref{Fig: Verify}) shows that the upper bound on the conditional delivery probability in Theorem~\ref{Theorem: bounding coverage prob} is sufficiently tight. Thus, we approximate the probability by the bound so as to simplify the analysis: 
\begin{equation} \label{eq: approximate CAMAC delivery probability}
P_d(\mathcal{F}_k) \approx \frac{1}{1+\theta_k^{\frac{2}{\alpha}} \left(\frac{1}{a_k}-1\right)}. 
\end{equation}

Next, for the conventional case without CAMAC, the conditional delivery probability, denoted as $\tilde{P}_d(\mathcal{F}_k)$, is given as 
\begin{align}\label{eq: non-synchronization def converage }
\tilde{P}_d(\mathcal{F}_k) =  \Pr\left(\frac{h r_k^{-\alpha}}{\sum _{X \in \Phi \backslash \{ T_k\} }h_X \left| X \right|^{-\alpha}}> \theta_k \right)
\end{align}
where $r_k$ is the distance between the  typical user to the nearest  helper $T_k$ having $\mathcal{F}_k$. The procedure for deriving  a closed-form expression for the above $\tilde{P}_d(\mathcal{F}_k)$ is standard  in the literature of stochastic geometry, involving essentially deriving the Laplace function of a shot-noise process using Campbell's Theorem \cite{martin}. It is straightforward to modify the existing results e.g., \cite[Lemma 3]{wen2017cache} to obtain the following lemma. 

\begin{lemma}[Content-Delivery Probability without CAMAC]\label{lemma: Coverage probability for non-synchronization network}\emph{For the conventional case without CAMAC, the conditional content-delivery probability is given as 
\begin{equation} \label{eq: Prob without alignment}
\tilde{P}_{d}(\mathcal{F}_k)=\frac{1}{1+ \mu \left(\theta_k, \alpha \right)+\theta_k^{\frac{2}{\alpha}}\frac{2\pi}{\alpha}\csc\left(\frac{2\pi}{\alpha}\right)\l(\frac{1}{a_k}-1\r)} 
\end{equation}
where the function $\mu$ is defined as 
\begin{equation}
\mu\left( \theta_k, \alpha \right)=\int_{1}^{\infty}\frac{1}{1+\theta_k^{-1}x^{\frac{\alpha}{2}}}dx.
\end{equation}
}
\end{lemma}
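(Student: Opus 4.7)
The plan is to follow the standard Rayleigh-exponential, PGFL-of-PPP machinery, paying careful attention to the fact that, without CAMAC, every interferer contributes an independent exponential fading power even when its cached file happens to coincide with $\mathcal{F}_k$. Specifically, I would first condition on the distance $r_k=|T_k|$ from the typical user to its associated helper. Since $T_k$ is the nearest point of $\Phi_k$ (a homogeneous PPP of density $a_k\lambda$), the contact distance has the Rayleigh-type density $f_{r_k}(r)=2\pi a_k\lambda r\,e^{-\pi a_k\lambda r^2}$ on $r>0$.

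Next, because $h$ in \eqref{eq: non-synchronization def converage } is unit-mean exponential and independent of the aggregate interference $I_k=\sum_{X\in\Phi\setminus\{T_k\}} h_X|X|^{-\alpha}$, I would rewrite
\[
\Pr\!\l(\SIR>\theta_k\,\Big|\,r_k\r)=\E\!\l[e^{-\theta_k r_k^{\alpha}I_k}\Big|r_k\r]=\mathcal{L}_{I_k|r_k}\!\l(\theta_k r_k^{\alpha}\r).
\]
Then I would decompose $\Phi\setminus\{T_k\}$ into two independent homogeneous PPPs: $\Phi_k\setminus\{T_k\}$, of density $a_k\lambda$ and restricted to the exterior of the ball $B(0,r_k)$ by the nearest-neighbour conditioning, and $\bigcup_{n\neq k}\Phi_n$, of density $(1-a_k)\lambda$ with no spatial exclusion (since those helpers cache files other than $\mathcal{F}_k$). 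Applying the PGFL of a PPP to each component, with fading marks averaged out via $\E_h[1-e^{-shu^{-\alpha}}]=(1+s^{-1}u^{\alpha})^{-1}$, gives two exponentials whose exponents are polar integrals.

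The two integrals are then reduced to the quoted constants by a change of variable $v=u/r_k$ (and $x=v^2$). The restricted one yields
\[
2\pi a_k\lambda\!\int_{r_k}^{\infty}\!\!\frac{u\,du}{1+\theta_k^{-1}(u/r_k)^{\alpha}}=\pi a_k\lambda r_k^{2}\,\mu(\theta_k,\alpha),
\]
directly reproducing the function $\mu$ in the statement, while the unrestricted one gives
\[
\pi(1-a_k)\lambda r_k^{2}\!\int_0^\infty\!\!\frac{dx}{1+\theta_k^{-1}x^{\alpha/2}}=\pi(1-a_k)\lambda r_k^{2}\,\theta_k^{2/\alpha}\frac{2\pi/\alpha}{\sin(2\pi/\alpha)},
\]
using the standard beta-function identity for $\alpha>2$. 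Finally, I would decondition over $r_k$ using its Rayleigh density; the substitution $y=\pi a_k\lambda r^{2}$ converts the outer integral into $\int_0^\infty e^{-y(1+A)}\,dy=(1+A)^{-1}$, with $A=\mu(\theta_k,\alpha)+\tfrac{1-a_k}{a_k}\theta_k^{2/\alpha}\tfrac{2\pi}{\alpha}\csc(2\pi/\alpha)$, which is exactly \eqref{eq: Prob without alignment}.

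The main obstacle is conceptual rather than computational: one must justify that an interferer from $\Phi_k\setminus\{T_k\}$ is treated as an independent-fading source of interference (not as coherent signal) precisely because CAMAC is absent, and simultaneously recognize that the nearest-helper rule forces the exclusion ball only for that sub-process and not for the $(1-a_k)\lambda$ part. Once this decomposition is in place, the remaining steps are the routine PGFL computation and the standard change of variable.
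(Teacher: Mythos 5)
Your proof is correct and is exactly the standard stochastic-geometry derivation the paper has in mind: the paper gives no detailed proof of this lemma, stating only that it follows by deriving the Laplace function of the interference via Campbell's theorem and modifying existing results such as \cite[Lemma 3]{wen2017cache}, which is precisely your Rayleigh contact-distance plus PGFL computation with the exclusion ball applied only to $\Phi_k\setminus\{T_k\}$ and not to the $(1-a_k)\lambda$ component. The two polar integrals and the final deconditioning over $r_k$ all check out (with $\alpha>2$, as assumed throughout the paper), so nothing is missing.
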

Since  the scaling factor $\frac{2\pi}{\alpha}\csc\left(\frac{2\pi}{\alpha}\right)\geq 1$, it follows from the above result that 
\begin{align} \label{eq: upper bound coverage probability without alignment}
\tilde{P}_{d}(\mathcal{F}_k)\leq \frac{1}{1+ \mu \left(\theta_k, \alpha \right)+\theta_k^{\frac{2}{\alpha}}\l(\frac{1}{a_k}-1\r)}.
\end{align}

It is ready to analyze the spatial-alignment gain, $G_{\text{align}}$, defined earlier. Consider the scenario of a highly skewed popularity distribution, corresponding to $\gamma\gg 1$ and $\gamma \approx  0$, respectively. As a result, $\mathcal{F}_1$ is dominant  with $a_1 \approx 1$ or equally popular as others with $a_1 \approx \frac{1}{N}$. Then the content-delivery probabilities can be approximated by the conditional probabilities for $D_0 = \mathcal{F}_1$: $P_{d}\approx P_{d}(\mathcal{F}_1)$ and $\tilde{P}_{d}\approx \tilde{P}_{d}(\mathcal{F}_1)$. Using the results in \eqref{eq: approximate CAMAC delivery probability} and  Lemma~\ref{lemma: Coverage probability for non-synchronization network}, the spatial-alignment gain is approximated as
\begin{align} \label{eq: alignment gain}
\boxed{G_{\text{align}} \approx 1+\frac{\mu\left(\theta_1, \alpha\right)}{1+{\theta_1}^{\frac{2}{\alpha}}\l(\frac{1}{a_1}-1\r)}}\ . 
\end{align}
One can see that the $G_{\text{align}}$ grows with the popularity measure $a_1$. As $a_1$ approaches $1$, $G_{\text{align}}$ converges to the limit of $1 + \mu\left(\theta_k, \alpha\right)$. This is the inverse of the limit of $\tilde{P}_{d}(\mathcal{F}_k)$ while that of $P_{d}(\mathcal{F}_k)$ is one. The implication is twofold. First, the network-performance gain due to CAMAC grows with the  skewness of the  popularity distribution.  Second, given the highly skewed  distribution,   a network with CAMAC operates in the noise-limited regime while a conventional network is interference limited.

\section{Simulation Results}\label{sec: simulation results}
The default  simulation settings are  as follows unless specified otherwise.  The content helpers are Poisson distributed with density $\lambda=0.1$ and all SIR threshold are given as $\theta_k=5$ for simplicity.  The path-loss exponent is set as $\alpha=3\ \mathrm{and} \ 4$ corresponding to the general case and special case.  For the popularity distribution, we adopt the widely used Zipf distribution with the content-popularity skewness  $\gamma \in [0,3]$: $ a_n=\frac{n^{-\gamma}}{\sum_{m=1}^{N}  m^{-\gamma}}$ for all $n$.  Last, the benchmarking case without CAMAC has the content-delivery probability given in \eqref{eq: Prob without alignment}. 

\begin{figure}[t!]
\begin{center}
\subfigure[Path-loss exponent $\alpha = 3$]{\includegraphics[width=7cm]{./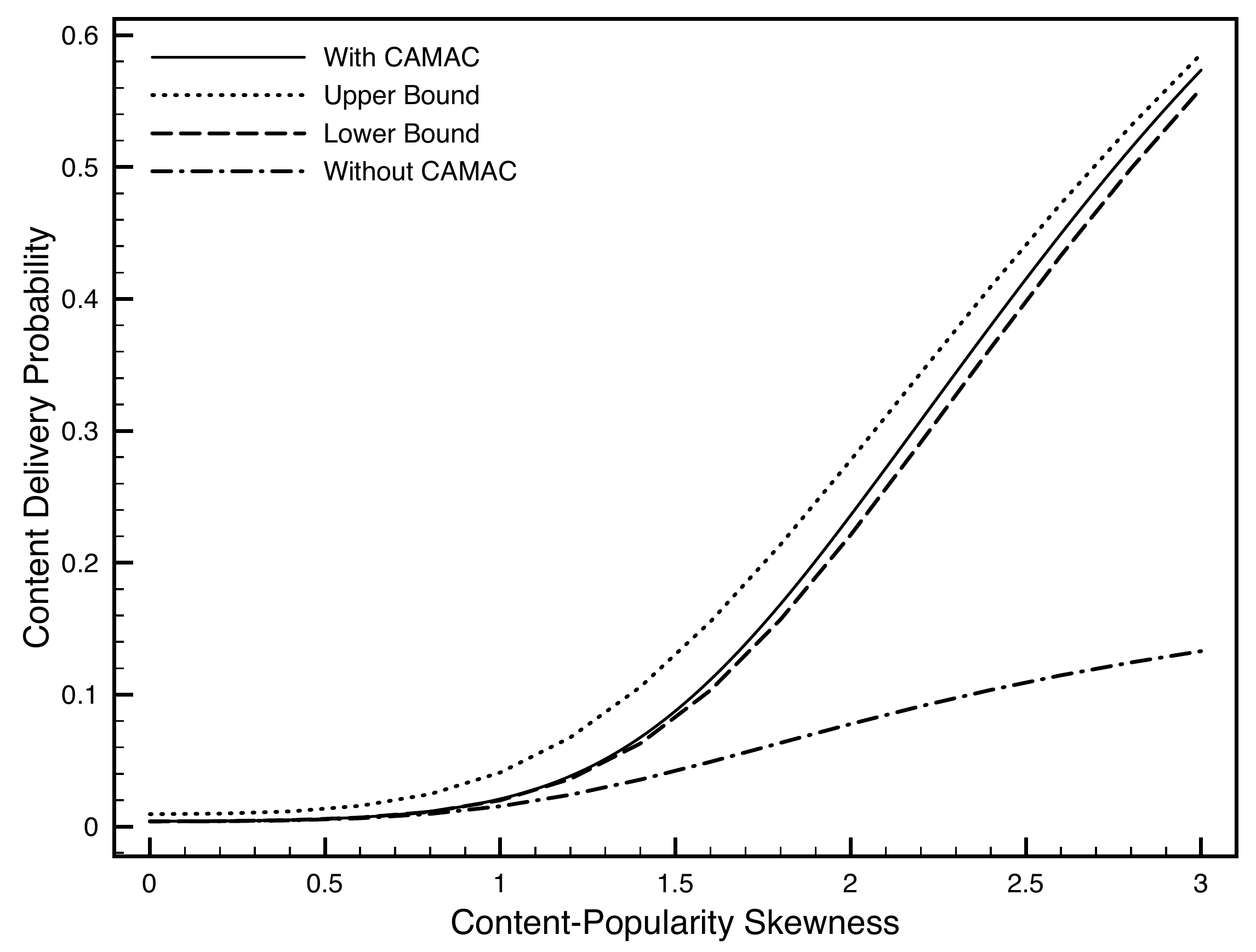}}\\
\subfigure[Path-loss exponent $\alpha = 4$]{\includegraphics[width=7cm]{./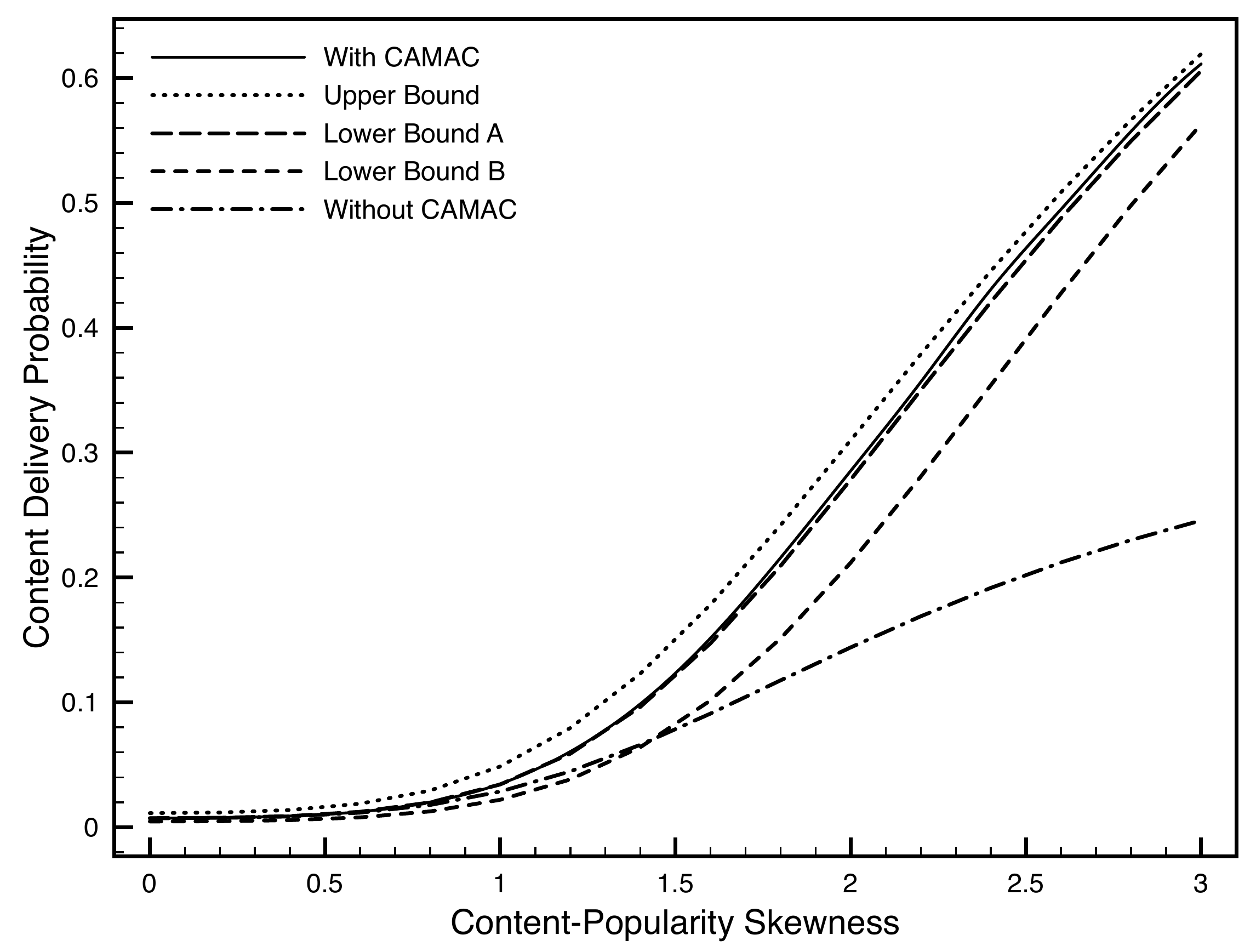}}
\caption{Comparisons of content delivery probability between the cases with and without CAMAC and evaluation of the derived bounds on the probability. The  number of files is $N=50$.}
\label{Fig: Verify}
\end{center}
\end{figure}

Fig.~\ref{Fig: Verify} displays the curves of convent-delivery probability versus the content-popularity skewness for both the cases with and without CAMAC. Moreover, the derived bounds on the probability in Theorem~\ref{Theorem: bounding coverage prob} are also plotted for evaluating their tightness. One can observe that as the skewness increases,  the delivery probability grows faster in the case with CAMAC than the conventional case.  In particular, for skewness of $3$, the spatial-alignment gain reaches about $6$ and $3$ for $\alpha$ equal to $3$ and $4$, respectively. The smaller gain for a larger path-loss exponent is due to that more severe propagation attention suppresses interference in the conventional case and thereby helps content delivery. However,  the path-loss exponent has  a negligible effect when CAMAC is used, for which the network is not interference limited. Next, it is observed that the upper bound and especially the lower bound  derived in Theorem~\ref{Theorem: bounding coverage prob} are tight. For the case of $\alpha = 4$, there exist lower bounds A and B (see Corollary~\ref{Co: special case:2}). The former is observed to be tighter than the latter that, however, is also  capable of tracing the variation of delivery probability.  {Last, it is worth mentioning that, the content-delivery probability for CAMAC  can be further improved by integrating the scheme with advanced communication techniques such as multi-antenna transmissions (see e.g., \cite{huang2013analytical}) and  successful interference cancellation \cite{weber2007transmission}.
}

The number of files in the content-data base can affect content delivery probability as revealed in Fig.~\ref{Fig: EffectN}.  {The analytical results in Theorems 1 and 2 are found to be identical with the simulation results shown in Fig.~\ref{Fig: EffectN}.} When the content-popularity skewness is small, different files have comparable popularity. In this case, one can observe Fig.~\ref{Fig: EffectN} that a smaller number of files amplifies the gain of CAMAC in terms of delivery probability due to larger set of signals aligned and combined by CAMAC and the corresponding reduction on interference.  Nevertheless, the differentiation in delivery probability for varying the database size diminishes as the skewness growth. For this case, the popularity is concentrated in one or a few files regardless of the database size.

\begin{figure}[t!]
\begin{center}
{\includegraphics[width=7cm]{./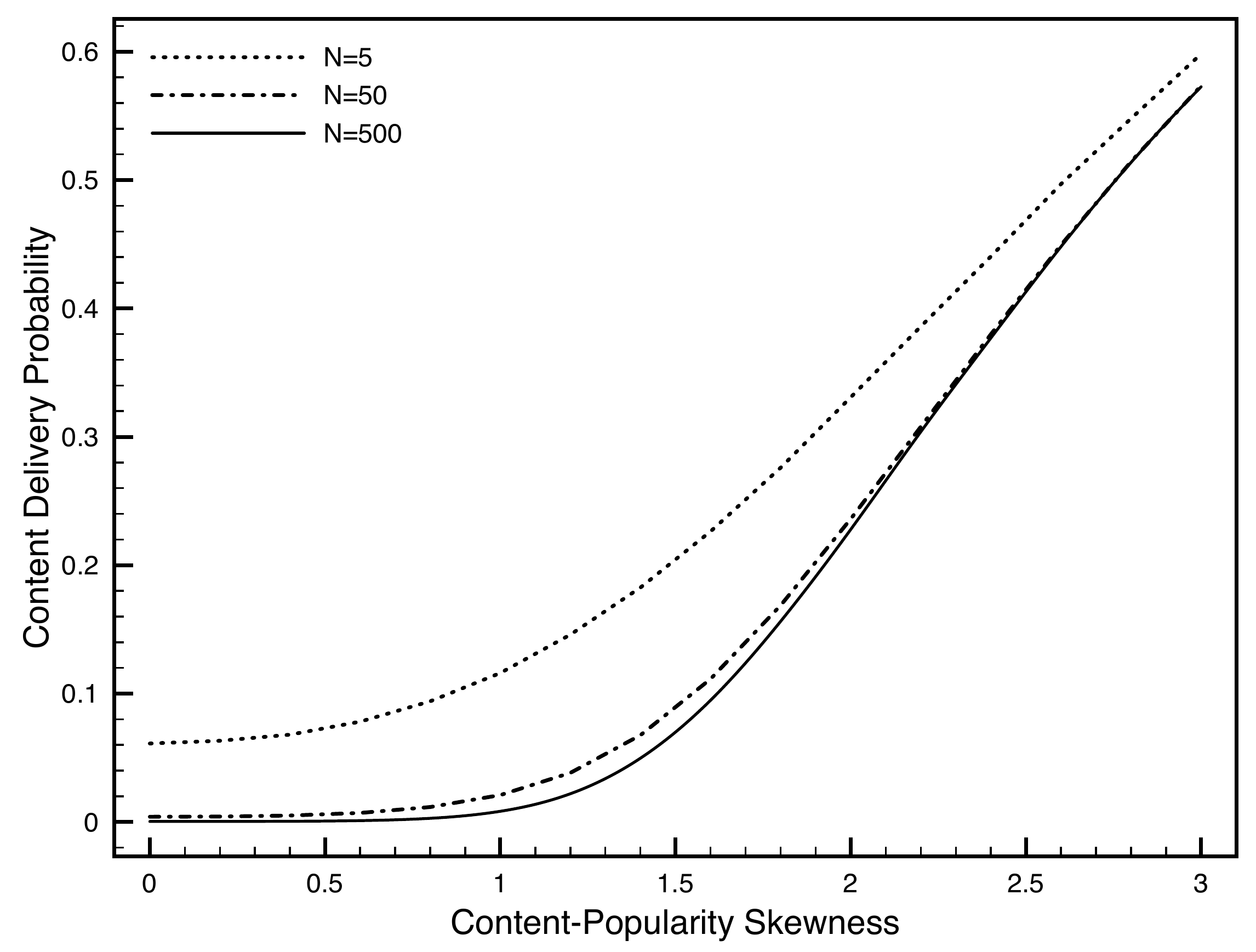}}\\
\caption{The effect of the number of files on the convent-delivery probability.}
\label{Fig: EffectN}
\end{center}
\end{figure}

The approximation of the spatial alignment gain as derived in \eqref{eq: alignment gain} is shown to be sufficiently  accurate by comparing with the exact simulation results in Fig.~\ref{Fig: comb gain limited caching}.   It is observed that the approximation is accurate throughout the considered range of skewness for different database sizes, even though deriving the result assumes  the skewness being either high or low. 

\begin{figure}[t!]
\begin{center}
{\includegraphics[width=7cm]{./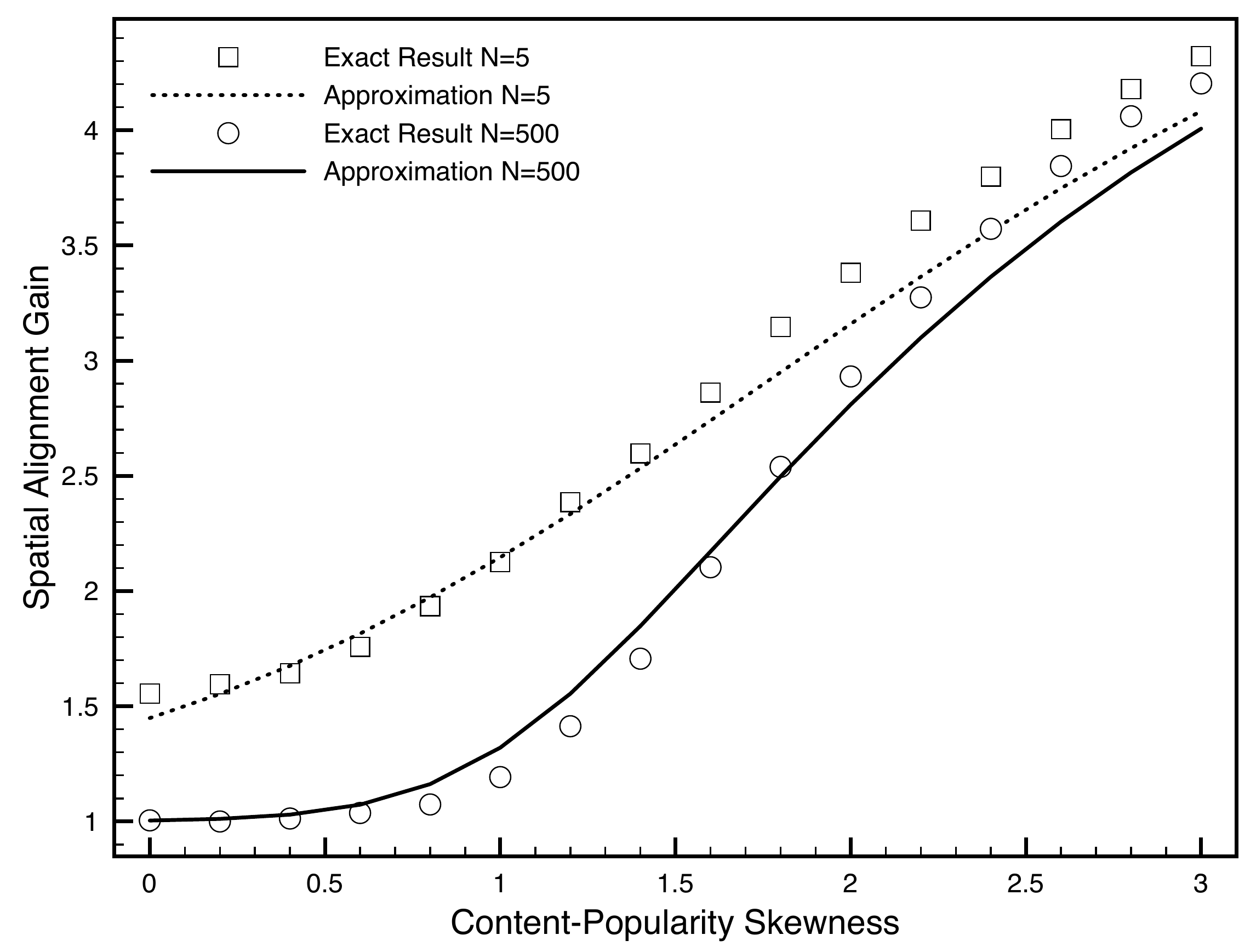}}\\
\caption{The spatial alignment gain and its approximation versus the content-popularity skewness. The number of files is $N = 5$ or $500$. }
\label{Fig: comb gain limited caching}
\end{center}
\end{figure}

\section{Concluding Remarks}\label{sec: Concluding Remarks}
In this work, we have proposed the idea of content adaptive modulation and coding (CAMAC) and quantify how much performance gain the idea can bring to a content delivery network. Through this work, we have shown that the spatial signal correlation in such a network can be exploited in simple ways to substantially enhance the reliability of content delivery.  {The performance of content-delivery networks with CAMAC can be improved by optimizing random caching strategies in the same vein as \cite{chen2017probabilistic,ji2016wireless,wen2017cache,chen2017cooperative,wen2017random}. The performance gain is expected to be significant especially for the scenario of heterogeneous helpers with e.g.,  different storage capacities and transmission powers. The design will involve the interplay between optimization theory and the approach of shot-noise ratio developed in this work. } As the current work represents an initial study of CAMAC, the implementation of the design requires further research to address various practical issues.  In particular, how to cope  with the frequency selectivity in the effective channels induced by CAMAC. Moreover, how to optimize the content-MAC mapping based on different quality-of-service requirements for different types of content. 

In the process of network-performance analysis using a stochastic-geometry model, we have solved an open problem of finding a tractable approach for analyzing the ratio of shot-noise processes. In general, the approach and the derived results can find other applications in studying general networks with cooperative transmissions beyond the current content delivery networks with CAMAC. 

\appendix

%   subsection  stable distribution
\subsection{Some Useful Properties of Stable Distribution}\label{app: stable distribution}
The characteristic function for the class of stable distribution defined in Definition~\ref{Def: stable distribution} can be expressed in two different forms as shown below \cite{zolotarev-stable-distribution}. 

 \begin{lemma}[Characteristic Function for Stable Distribution \cite{zolotarev-stable-distribution}] \label{lemma: CF stable distribution}\emph{
The logarithm of the characteristic function for a r.v. $X$ belonging to the class of stable distribution can be written in different forms: 
\begin{itemize}
\item  \textbf{Form A }
\begin{align}\label{eq: def form A}
\log \E\left[e^{jtX}\right]=\left\{\begin{aligned}
jt\mu_{A}\gamma_{A}-\mu_{A}\left|t\right|^{\delta}&\Big(1-j \beta_A \  \mathrm{sgn}(t)\\
&\times\tan\frac{\pi \delta}{2}\Big), \ \delta\neq 1\\ 
jt\mu_{A}\gamma_{A}-\mu_{A}\left|t\right|&\Big(1+j \beta_A\frac{2}{\pi}   \ \mathrm{sgn}(t)\\
&\times \log|t|\Big), \  \delta=1
\end{aligned}\right.
\end{align}
where the real parameters satisfy  $\delta\in [0, 2]$, $\beta_A \in [-1, 1]$, $\gamma_A\in \mathds{R}$, $\mu_A>0$,  and the sign function $\mathrm{sgn}(t)$ is defined as: 
\begin{align}\label{eq: sgn}
\mathrm{sgn}(t)=\left\{\begin{aligned}
&1, &t>0\\ 
&0, &t=0\\ 
&-1, &t<0.
\end{aligned}\right.
\end{align}
%%%%  Definition Form B   %%%%
\item  \textbf{Form B }
\begin{align}\label{eq: def form B}
\log \E\left[e^{jtX}\right]=\left\{\begin{aligned}
jt\mu_{B}\gamma_{B}-\mu_{B}\left|t\right|^{\delta}\exp&\Big(-j \beta_B\frac{\pi}{2} \mathrm{sgn} (t) \\
&\times  K(\delta)\Big), \ \delta\neq 1\\ 
jt\mu_{B}\gamma_{B}-\mu_{B}\left|t\right|\Big(\frac{\pi}{2}&+j \beta_B\mathrm{sgn}(t)  \\
&\times \log|t|\Big), \ \delta=1
\end{aligned}\right.
\end{align}
where the parameters satisfy the same constraints as for Form A and the function  $K(\delta)=\delta-1+\mathrm{sgn}(1-\delta)$. 
\end{itemize}}
\end{lemma}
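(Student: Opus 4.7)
The plan is to start from the stability relation $aX_1+bX_2\sim cX+d$ in Definition~\ref{Def: stable distribution} and translate it into a functional equation for $\psi(t):=\log \E[e^{jtX}]$. Taking characteristic functions of both sides gives $\phi_X(at)\phi_X(bt)=e^{jdt}\phi_X(ct)$, which in log form reads
\begin{equation*}
\psi(at)+\psi(bt)=\psi(ct)+jdt,\qquad t\in\mathds{R}.
\end{equation*}
Iterating the stability property for sums of $n$ i.i.d.\ copies and matching the scaling constants forces $c^{\delta}=a^{\delta}+b^{\delta}$ for a unique characteristic exponent $\delta\in(0,2]$; this $\delta$ parameterizes the entire family.

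I would then specialize to the strictly stable case ($d=0$) and solve $\psi(at)+\psi(bt)=\psi\bigl((a^{\delta}+b^{\delta})^{1/\delta}t\bigr)$. Using continuity of $\psi$ and separating positive and negative arguments, the admissible solutions are homogeneous of degree $\delta$: $\psi(t)=\lambda_{+}t^{\delta}$ for $t>0$ and $\psi(t)=\lambda_{-}(-t)^{\delta}$ for $t<0$ with complex constants $\lambda_{\pm}$. Reality of $X$ forces $\psi(-t)=\overline{\psi(t)}$, so $\lambda_{-}=\overline{\lambda_{+}}$; writing $\lambda_{+}$ in rectangular form gives $\psi(t)=-\mu_A|t|^{\delta}\bigl(1-j\beta_A\,\mathrm{sgn}(t)\,A(\delta)\bigr)$ for real $\mu_A>0$ and $\beta_A\in[-1,1]$. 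Evaluating the L\'evy--Khintchine integral $\int(e^{jtx}-1-jtx\mathbf{1}_{|x|\leq 1})\nu(dx)$ with stable L\'evy measure $\nu(dx)=c_{\pm}|x|^{-1-\delta}dx$ (different constants on the positive and negative axes) identifies $A(\delta)=\tan(\pi\delta/2)$ for $\delta\neq 1$. The case $\delta=1$ is singular because $\tan(\pi/2)$ diverges; redoing the same integral at $\delta=1$ produces the logarithmic correction $\tfrac{2\beta_A}{\pi}\mathrm{sgn}(t)\log|t|$ appearing in Form~A. Allowing $d\neq 0$ finally adds the drift $jt\mu_A\gamma_A$, completing Form~A.

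To convert Form~A into Form~B, write the factor $1-j\beta_A\,\mathrm{sgn}(t)\tan(\pi\delta/2)$ in polar form for $\delta\neq 1$: its modulus equals $\sec(\pi\delta/2)\sqrt{\cos^{2}(\pi\delta/2)+\beta_A^{2}\sin^{2}(\pi\delta/2)}$ and its phase equals $-\mathrm{sgn}(t)\arctan(\beta_A\tan(\pi\delta/2))$. Absorbing the modulus into a rescaled $\mu_B$ and setting $\beta_B=\tfrac{2}{\pi K(\delta)}\arctan\bigl(\beta_A\tan(\pi\delta/2)\bigr)$ with $K(\delta)=\delta-1+\mathrm{sgn}(1-\delta)$ reproduces the exponential form \eqref{eq: def form B}. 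The $\delta=1$ case is handled analogously by tracking the logarithmic singularity through the same polar decomposition.

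The main obstacle is the rigorous identification of $A(\delta)$ and of the logarithmic correction at $\delta=1$: both require evaluating singular Mellin-type integrals coming from the L\'evy canonical measure, and carrying out the limit $\delta\to 1$ via a centering step so that the resulting characteristic function remains well defined. Since this is a classical result, I would defer the low-level computations to Zolotarev's treatment and use the proposal primarily to verify the polar-form reparameterization linking Form~A and Form~B, which is what Lemma~\ref{lemma: transfer differential shot noise} subsequently exploits in converting the stable law of $M(x;\lambda_1,\lambda_2)$ into the zero-crossing probability that yields $\bar F(x;\lambda_1,\lambda_2)$.
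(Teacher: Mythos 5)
The paper itself offers no proof of Lemma~\ref{lemma: CF stable distribution}: it is quoted directly from Zolotarev \cite{zolotarev-stable-distribution}, so there is no internal argument to compare against, and your proposal should be judged as a reconstruction of the classical derivation. On that score it is essentially sound and follows the standard route: the stability relation in Definition~\ref{Def: stable distribution} does give the functional equation $\psi(at)+\psi(bt)=\psi(ct)+jdt$ for $\psi(t)=\log\E[e^{jtX}]$, matching scaling constants forces $c^{\delta}=a^{\delta}+b^{\delta}$ for a unique $\delta\in(0,2]$, homogeneity together with $\psi(-t)=\overline{\psi(t)}$ yields the Form-A shape, and pinning down $\tan\frac{\pi\delta}{2}$ (with the $\log|t|$ correction at $\delta=1$) indeed requires the L\'evy--Khintchine integral against the one-sided measures $c_{\pm}|x|^{-1-\delta}dx$ --- technical steps you defer to Zolotarev, exactly as the paper does. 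The part of your write-up that adds value is the Form-A-to-Form-B passage, and it checks out against the paper's own conversion formulas: writing $1-j\beta_A\,\mathrm{sgn}(t)\tan\frac{\pi\delta}{2}=r\,e^{-j\,\mathrm{sgn}(t)\theta}$ with $\theta=\arctan\bigl(\beta_A\tan\frac{\pi\delta}{2}\bigr)$ and $r=\sqrt{1+\beta_A^{2}\tan^{2}\frac{\pi\delta}{2}}$, your $\beta_B=\frac{2}{\pi K(\delta)}\arctan\bigl(\beta_A\tan\frac{\pi\delta}{2}\bigr)$ inverts to $\beta_A=\cot\bigl(\frac{\pi\delta}{2}\bigr)\tan\bigl(\frac{\pi\beta_B K(\delta)}{2}\bigr)$, which is precisely \eqref{eq: Relation Form A and B delta neq 1}, and absorbing $r$ into the scale is consistent with $\mu_A=\mu_B\cos\bigl(\frac{\pi\beta_B K(\delta)}{2}\bigr)$ since $\cos\theta=1/r$. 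Two minor cautions: for $\delta\in(1,2)$ your modulus expression needs $\bigl|\sec\frac{\pi\delta}{2}\bigr|$ because $\cos\frac{\pi\delta}{2}<0$ there, and it is exactly the switch $K(\delta)=\delta-2$ on that range that keeps $|\beta_B|\le 1$ and $\mu_B>0$; also the claim that allowing $d\neq 0$ ``adds the drift $jt\mu_A\gamma_A$'' is immediate only for $\delta\neq 1$, since at $\delta=1$ the centering interacts with the $\log|t|$ term and requires the limiting argument you allude to. Note finally that the downstream use in Lemma~\ref{lemma: transfer differential shot noise} only needs \eqref{eq: Relation Form A and B delta neq 1} with $\delta=2/\alpha<1$, where none of these boundary subtleties arise.
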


For ease of notation, the class of stable distribution in Forms A and B are denoted as  $S_A(\delta, \beta_A, \gamma_A, \mu_A)$  and $S_B(\delta, \beta_B, \gamma_B, \mu_B)$, respectively. 

As shown in Lemma~\ref{lemma: CF stable distribution}, a stable distribution is characterised by four parameters. The parameters $\gamma$ and $\mu$ are  the location (or shift) and scale parameters, respectively.   On the other hand,  $\delta$ and $\beta$ essentially specify  the shape of PDF. To be specific, $\delta$ is called the  index of stability or characteristic exponent that  is a measure of concentration;  $\beta$ is called the skewness parameter that  is a measure of asymmetry. By comparing  Forms A and B in Lemma~\ref{lemma: CF stable distribution}, the relations between their  parameters are given as: 
\begin{itemize}
\item For $\delta$=1, 
\begin{equation}
\beta_A=\beta_B, \quad \gamma_A=\frac{2\gamma_B}{\pi}, \quad \mu_A=\frac{\pi \mu_B}{2}.  \label{eq: relation equal to 1}
\end{equation}
\item For $\delta\neq1$, 
\end{itemize}
\begin{equation} \label{eq: Relation Form A and B delta neq 1}
\left\{
\begin{aligned}
\beta_A&=\cot \left(\frac{\pi\delta}{2}\right)\tan\l(\frac{\pi\beta_B K(\delta)}{2}\r),\\
\gamma_A&=\gamma_B \l(\cos\l(\frac{\pi \beta_B K(\delta)}{2}\r)\r)^{-1},   \\
\mu_A&=\mu_B\cos\l(\frac{\pi \beta_B K(\delta)}{2}\r).
\end{aligned}
\right.
\end{equation}

For several special cases, the distribution function of a stable r.v. has simple forms as shown below. 

 \begin{lemma}[Properties of Stable Distribution \cite{zolotarev-stable-distribution}]\label{lemma: Special cases of stable distribution}\emph{The distribution function of a stable r.v., $X$, is given for several special cases as follows. 
\begin{itemize}
\item[1)]  If $X\sim S_B(\delta, \beta_B, 0, 1)$ and $\delta \neq 1$, then 
\begin {align} \label{eq:crossing 0 prob}
\Pr(X<0)=\frac{1}{2}\left(1-\beta_B \frac{K(\delta)}{ \delta}\right).
\end{align}
\item[2)] If $X\sim S_B(\delta, 1, 0, 1)$ and $\delta< 1$, then $\Pr(X<x)=0$ for all $x<0$.  
\item[3)] If $X\sim S_B(\delta, -1, 0, 1)$ and $\delta< 1$, then $\Pr(X<x)=1$ for all $x>0$.
\end{itemize}}  
\end{lemma}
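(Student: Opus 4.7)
The plan is to derive all three parts from the Fourier-inversion (Gil-Pelaez) formula applied to the Form-B characteristic function of \eqref{eq: def form B}. Evaluated at the origin, this identity reads
\begin{equation*}
\Pr(X>0)=\frac{1}{2}+\frac{1}{\pi}\int_0^\infty \frac{\mathrm{Im}[\phi_X(t)]}{t}\,dt,
\end{equation*}
which applies because every non-degenerate stable law with $\delta<2$ has a continuous density on $\mathds{R}$. The whole job then reduces to evaluating a single real integral for part 1) and specializing afterwards.

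For part 1), set $\theta\bydef \beta_B\pi K(\delta)/2$. Under the hypothesis $\delta\ne 1$, the Form-B expression for $X\sim S_B(\delta,\beta_B,0,1)$ factors on $t>0$ as
\begin{equation*}
\phi_X(t)=e^{-t^\delta\cos\theta}\bigl[\cos(t^\delta\sin\theta)+j\sin(t^\delta\sin\theta)\bigr].
\end{equation*}
Since $|\beta_B|\le 1$ and the explicit form of $K(\delta)$ gives $|K(\delta)|<1$ whenever $\delta\in(0,2]\setminus\{1\}$, one has $|\theta|<\pi/2$ and hence $\cos\theta>0$. The substitution $u=t^\delta\cos\theta$ then reduces the inversion integral to the classical
\begin{equation*}
\frac{1}{\delta}\int_0^\infty \frac{e^{-u}\sin(u\tan\theta)}{u}\,du=\frac{\arctan(\tan\theta)}{\delta}=\frac{\theta}{\delta},
\end{equation*}
where the last equality uses $\theta\in(-\pi/2,\pi/2)$. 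Substituting back gives $\Pr(X>0)=\tfrac{1}{2}+\beta_B K(\delta)/(2\delta)$, and taking the complement is exactly \eqref{eq:crossing 0 prob}.

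Parts 2) and 3) then drop out by specialization. When $\delta<1$ the definition yields $K(\delta)=\delta-1+\mathrm{sgn}(1-\delta)=\delta$, so $\beta_B K(\delta)/\delta=\beta_B$. Plugging $\beta_B=+1$ and $\beta_B=-1$ into part 1) gives $\Pr(X<0)=0$ and $\Pr(X<0)=1$, respectively. The extension from $x=0$ to every $x<0$ (resp.\ $x>0$) is immediate by monotonicity of the CDF: $\Pr(X<x)\le \Pr(X<0)=0$ for any $x<0$ in part 2), and $\Pr(X<x)\ge \Pr(X\le 0)\ge \Pr(X<0)=1$ for any $x>0$ in part 3). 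An equivalent and more structural way to see this is that for $\delta<1$ and $\beta_B=1$ the characteristic function continues analytically to the Laplace transform $\E[e^{-sX}]=\exp(-s^\delta)$ on $s\ge 0$, which is the hallmark of a one-sided positive stable variable; part 3) then follows by the symmetry $X\mapsto -X$.

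The main obstacle will be the two technical checks that make the one-line inversion argument work: (i) absolute convergence of the Gil-Pelaez integral, which requires the near-origin estimate $\mathrm{Im}[\phi_X(t)]\sim t^\delta\sin\theta$ (integrable against $dt/t$ because $\delta>0$) together with the exponential decay $e^{-t^\delta\cos\theta}$ at infinity, for which one must first establish $\cos\theta>0$; and (ii) the identity $\arctan(\tan\theta)=\theta$, which would fail if $|\theta|\ge\pi/2$ and is precisely the reason the bound $|K(\delta)|\le 1$ is hardwired into the Form-B parameterization. Everything else is bookkeeping.
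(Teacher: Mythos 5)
Your derivation is correct, but it is not the paper's route: the paper does not prove this lemma at all, it simply imports it from Zolotarev's monograph \cite{zolotarev-stable-distribution} as a catalogued property of the Form-B parameterization. Your argument supplies the missing self-contained proof, and it checks out: Gil--Pelaez at $x=0$ is legitimate because a non-degenerate stable law with $\delta<2$ is absolutely continuous (so $\Pr(X<0)=\Pr(X\le 0)$ and $0$ is a continuity point), the Form-B exponent for $t>0$ is exactly $-t^{\delta}e^{-j\theta}$ with $\theta=\beta_B\tfrac{\pi}{2}K(\delta)$, your bound $|K(\delta)|<1$ for $\delta\in(0,2]\setminus\{1\}$ gives $\cos\theta>0$ and hence both convergence of the inversion integral and the identity $\arctan(\tan\theta)=\theta$, and the classical integral $\int_0^\infty e^{-u}\sin(u\tan\theta)\,u^{-1}du=\theta$ then yields $\Pr(X>0)=\tfrac12+\beta_B K(\delta)/(2\delta)$, which is \eqref{eq:crossing 0 prob} after complementation; parts 2) and 3) indeed follow by setting $K(\delta)=\delta$ for $\delta<1$, plugging $\beta_B=\pm1$, and extending from $x=0$ by monotonicity of the CDF (a sanity check consistent with the paper's own use of the lemma, where the $x=0$ case of the differential shot noise reduces to a one-sided positive shot-noise process, matching your Laplace-transform remark $\E[e^{-sX}]=\exp(-s^{\delta})$). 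What the two approaches buy: the paper's citation keeps the appendix short and defers all measure-theoretic care to the reference, whereas your proof makes the statement verifiable in place and makes transparent exactly where the Form-B constraint $|K(\delta)|\le 1$ is used — namely to keep $|\theta|<\pi/2$ so that the zero-crossing probability depends on the parameters only through $\beta_B K(\delta)/\delta$, the quantity that later drives the CCDF of the shot-noise ratio in Proposition~\ref{pro: CCDF ratio of shot noise}.
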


\subsection{Proof of Lemma~\ref{pro: CF marked point process}}\label{app: CF marked point process}
The characteristic function can be written in the product form
\begin{align}
\E\l[e^{jtM(x; \lambda_1, \lambda_2)}\r]=\E&\left[e^{j t \sum_{X \in \Phi_1}|X|^{-\alpha}}\right] \nn\\
&\times\E\left[e^{j t\left(-x \sum_{Z\in \Phi_2}|Z|^{-\alpha}\right)}\right]. \label{eq: product form CF DiffShot}
\end{align}
The first term in \eqref{eq: product form CF DiffShot} is the  characteristic function of a shot noise process that is well known and given as (see e.g., \cite{martin}) 
\begin{align}
\E\left[e^{j t \sum_{X \in \Phi_1}|X|^{-\alpha}}\right]&=\exp\left(j \lambda_1 \pi t \int_{0}^{\infty} x^{-\frac{2}{\alpha}}e^{j t x} dx\right) \nn\\
&=e^{-\lambda_1 \pi \Gamma\left(1-\frac{2}{\alpha}\right)\left(-jt\right)^{\frac{2}{\alpha}}}. \label{Eq: CF single shot noise} 
\end{align}
Similarly, the second term in \eqref{eq: product form CF DiffShot} is obtained as 
\begin{align}
\E\left[e^{j t\left(-x \sum_{Z\in \Phi_2}|Z|^{-\alpha}\right)}\right]
&=e^{-\lambda_2  \pi \Gamma\left(1-\frac{2}{\alpha}\right)\left(jtx\right)^{\frac{2}{\alpha}}}. \label{Eq: CF single negative shot noise} 
\end{align}
Substituting \eqref{Eq: CF single shot noise} and \eqref{Eq: CF single negative shot noise} into \eqref{eq: product form CF DiffShot} leads to  the following expression for the characteristic  function: 
\begin{equation} 
G\left(t; \lambda_1, \lambda_2\right)=e^{-\lambda_1 \pi \Gamma\left(1-\frac{2}{\alpha}\right)\left(-jt\right)^{\frac{2}{\alpha}}}\times e^{-\lambda_2  \pi \Gamma\left(1-\frac{2}{\alpha}\right)\left(jtx\right)^{\frac{2}{\alpha}}}.  \label{eq: mid step CF differential shot noise}
\end{equation}
Using the elementary identities $j=e^{j\frac{\pi}{2}}$ and $e^{j\frac{\pi}{\alpha}}=\cos\frac{\pi}{\alpha}+j \sin\frac{\pi}{\alpha}$, the expression can be rewritten in the desired form in the lemma statement. 

\subsection{Proof of Lemma~\ref{lemma: transfer differential shot noise}} \label{app: transfer differential shot noise}
Using Proposition~\ref{pro: marked PP is stable distribution} and Lemma~\ref{lemma: CF stable distribution} and parametric relations in  \eqref{eq: Relation Form A and B delta neq 1} in Appendix~\ref{app: stable distribution}, the characteristic function of $M(x; \lambda_1, \lambda_2)$ is given as 
\begin{equation}
\E\l[e^{j t M(x; \lambda_1, \lambda_2)}\r] = e^{-\mu_{B}\left|t\right|^{\frac{2}{\alpha}}\exp\left(-j \beta_B\frac{\pi}{\alpha}  \mathrm{sgn} (t) \right)}. 
\end{equation}
Assume that the equivalence in \eqref{Eq:Sim:2} holds. Then the characteristic function of $\widetilde{M}(x; \lambda_1, \lambda_2)$ with the variable $t'$  follows from the above equation by substituting
$t = t' \times {\mu_B}^{-\alpha/2}$. 
As a result, 
\begin{equation}
\E\l[e^{j t' \widetilde{M}(x; \lambda_1, \lambda_2)}\r] = e^{-\left|t'\right|^{\frac{2}{\alpha}}\exp\left(-j \beta_B\frac{\pi}{\alpha}  \mathrm{sgn} (t) \right)}. 
\end{equation}
Comparing the expression with the characteristic function of  Form-B stable distribution in Lemma~\ref{lemma: CF stable distribution} in Appendix~\ref{app: stable distribution} gives the equality in \eqref{Eq:Sim:1}, confirming that $\widetilde{M}(x; \lambda_1, \lambda_2)$ is the normalized differential shot-noise and that the assumed result in \eqref{Eq:Sim:2} holds. This completes the proof.

\subsection{Proof of Proposition~\ref{pro: LT ratio of shot noise process}}\label{app: LT ratio of shot noise process}
To get the result of \eqref{eq: mid LT shot-noise ratio}, we obtain $\E\left[e^{-t\left(\sum_{Z\in\Phi_2}\left|Z\right|^{-\alpha}\right)^{-\frac{2}{\alpha}}}\right]$ first by using the series form of PDF of $S(\Phi)$ provided in \eqref{pdf: shot noise}.  Accordingly, 
\begin{align}\label{eq: E of shot noise to arbitrary power}
\E\left[e^{-t \left(\sum_{Z\in \Phi_2}\left|Z\right|^{-\alpha}\right)^{-\frac{2}{\alpha}}}\right]&\nn\\
 =\int_{x>0}e^{-tx^{-\frac{2}{\alpha}}}\frac{1}{\pi x }\sum_{m=1}^{\infty}&\frac{(-1)^{m+1}\Gamma(1+m\frac{2}{\alpha})\sin \pi m \frac{2}{\alpha}}{m!} \nn\\
 &\times \left(\frac{\lambda \pi \Gamma\left(1-\frac{2}{\alpha}\right)}{x^{\frac{2}{\alpha}}}\right)^m dx .
\end{align}
The integral in \eqref{eq: E of shot noise to arbitrary power} is convergent and thus according to Fubini's theorem, we can interchange the order of integral and summation, which yields 
\begin{align}\label{eq: E of shot noise to arbitrary power final}
&\E\left[e^{-t \left(\sum_{Z\in \Phi_2}\left|Z\right|^{-\alpha}\right)^{-\frac{2}{\alpha}}}\right]\nn\\
 &=\sum_{m=1}^{\infty}\frac{(-1)^{m+1}\Gamma(1+m\frac{2}{\alpha})\sin \pi m\frac{2}{\alpha}}{\pi m!}\left(\lambda \pi \Gamma\left(1-\frac{2}{\alpha}\right)\right)^m \nn\\
&\qquad \qquad \qquad \qquad \qquad \qquad \times \int_{0}^{\infty}e^{-tx^{-\frac{2}{\alpha}}}x^{-\frac{2}{\alpha}m-1} dx \nn\\
&\overset{\left( a \right)}{=}\sum_{i=1}^{\infty}\frac{\left(-1\right)^{m+1}}{\Gamma(1-m\frac{2}{\alpha})}\left(\frac{\lambda \pi \Gamma(1-\frac{2}{\alpha})}{t}\right)^m
\end{align}
where $(a)$ is derived from $\Gamma(1-z) \Gamma(z)=\frac{\pi}{\sin(\pi z)}$ for all non-integer $z$.  Substituting $t=\lambda_1 \pi \Gamma(1-\frac{2}{\alpha})s^{\frac{2}{\alpha}}$, $\lambda=\lambda_2$ and combing \eqref{eq: mid LT shot-noise ratio} and \eqref{eq: E of shot noise to arbitrary power final}, we can obtain the desired result.

\subsection{Proof of Lemma~\ref{Lemma: scaling_PPP}} \label{app: scaling_PPP}

First, we obtain the invariance property of shot noise with respect to linear operations. Consider two independent homogeneous PPPs $\Phi_1$ and $\Phi_2$ with density $\lambda_1$ and $\lambda_2$ separately.  Given two positive constants $a$ and $b$, the weighted sum of two shot-noise processes $a S(\Phi_1) + b S(\Phi_2)$ has a Laplace function obtained as 
\begin{align*}
&\E\left[e^{-s \left(a\sum_{X\in \Phi_1\left(\lambda_1\right)}\left|X\right|^{-\alpha}+b\sum_{Z\in \Phi_2\left(\lambda_2\right)}\left|Z\right|^{-\alpha}\right)}\right] \nn\\
&\overset{\left( a \right)}{=}\E \left[ \prod_{X\in \Phi_1}e^{-s a\left|X\right|^{-\alpha}}\right]\E \left[ \prod_{Z\in \Phi_2}e^{-s b\left|Z\right|^{-\alpha}}\right] \\
&=e^{- \pi \Gamma\left(1-\frac{2}{\alpha}\right)\left(\left(\lambda_1 a^{\frac{2}{\alpha}}+\lambda_2 b^{\frac{2}{\alpha}}\right)^{\frac{\alpha}{2}} s\right)^{\frac{2}{\alpha}}}
\end{align*}
where $(a)$ applies  Campbell's theorem \cite{kingman1993poisson}. It can be observed that the result is equivalent to the Laplace function of  $\left(a^{\frac{2}{\alpha}}\lambda_1+b^{\frac{2}{\alpha}}\lambda_2\right)^{\frac{\alpha}{2}}S(\bar{\Phi})$ with $\bar{\Phi}$ being a PPP with unit density. Then the desired results follow from applying this property to  the signal-and-interference expressions.

\subsection{Proof of Corollary~\ref{Co: special case:1}} \label{app: special case:1}
When $\alpha=4$, the conditional content delivery probability in Theorem~\ref{Th: Coverage Probability expectation form} is simplified as  
\begin{align*}
{P}_{d}(\mathcal{F}_k)=\frac{1}{2}+\frac{2}{\pi}\E_{h_k,g_k}\left[\arctan\left(\frac{1-g_k\sqrt{\frac{\theta_k}{h_k}}}{1+g_k\sqrt{\frac{\theta_k}{h_k}}}\right)\right] \nn\\
= \frac{1}{2}+\frac{2}{\pi}\E_{h_k,g_k}\left[\arctan\!\!\left(\tan\!\!\left(\frac{\pi}{4}-\arctan\left(g_k\sqrt{\frac{\theta_k}{h_k}}\right)\!\!\right)\!\!\right)\right]
\end{align*}
which leads to the result in \eqref{eq: Coverage special case}.

\subsection{Proof of Lemma~\ref{lemma: Relationships of different forms shot-noise ratio}}\label{app: Relationships of different forms shot-noise ratio}
Conditioned of the process $\Phi$, the delivery probability based on $\SIR'$ can be expressed as
\begin{align}
\Pr(\SIR '>\theta_k \mid \Phi)&=\E_{h_z}\!\!\!\left[\exp\left(-\frac{\theta_k\sum_{z\in \Phi \backslash \Phi_k}h_z \left|Z\right|^{-\alpha}}{\sum_{X \in \Phi_k }\left| X\right| ^{-\alpha}}\right) \!\! \Big | \Phi \right] \nn \\
&=\!\!\!\!\prod_{{z\in \Phi \backslash \Phi_k}}\!\!\! \E_{h_z}\!\!\! \left[\exp\left(-\frac{\theta_k h_z \left|Z\right|^{-\alpha}}{\sum_{X \in \Phi_k }\left| X\right| ^{-\alpha}}\right)\!\! \Big |  \Phi \right] \nn\\
&=\prod_{n\neq k }\left(\prod_{{z\in \Phi_n}}\frac{1}{1+\frac{\theta_k \left|Z\right|^{-\alpha}}{\sum_{X \in \Phi_k }\left| X\right| ^{-\alpha}}}\right)\nn\\
& \leq\prod_{n \neq k }\frac{1}{1+\frac{\theta_k \sum_{z\in \Phi_n}\left|Z\right|^{-\alpha}}{\sum_{X \in \Phi_k }\left| X\right| ^{-\alpha}}} .\label{eq:R_1}
\end{align}
Similarly, for $\SIR$ and $\SIR''$, we have 
\begin{align}
&\Pr(\SIR>\theta_k\mid \Phi)\nn\\
&=\E_{h_n}\left[\exp\left(-\frac{\theta_k \sum_{n \neq k} h _n\sum_{Z\in \Phi_n}\left|Z\right|^{-\alpha}} { \sum_{X \in \Phi_k }\left| X\right| ^{-\alpha}}\right) \!\! \Big | \Phi \right]  \nn \\
&=\prod_{n \neq k}\E_{h_n}\left[\exp\left(-\frac{\theta_k  h _n\sum_{Z\in \Phi_n}\left|Z\right|^{-\alpha}} { \sum_{X \in \Phi_k }\left| X\right| ^{-\alpha}}\right)\!\! \Big |  \Phi \right] \nn \\
&=\prod_{n \neq k }\frac{1}{1+\frac{\theta_k \sum_{z\in \Phi_n}\left|Z\right|^{-\alpha}}{\sum_{X \in \Phi_k }\left| X\right| ^{-\alpha}}}\leq\frac{1}{1+\frac{\theta_k \sum_{z\in \Phi \backslash \Phi_k}\left|Z\right|^{-\alpha}}{\sum_{X \in \Phi_k }\left| X\right| ^{-\alpha}}},\label{eq:R_2} \\
&\Pr(\SIR''>\theta_k\mid \Phi)\nn\\
&=\E_{h_{k'}}\left[\exp\left(-\frac{\theta_k h_{k'}\sum_{z\in \Phi \backslash \Phi_k} \left|Z\right|^{-\alpha}}{\sum_{X \in \Phi_k }\left| X\right| ^{-\alpha}}\right)\!\! \Big | \Phi \right] \nn\\
&=\frac{1}{1+\frac{\theta_k \sum_{z\in \Phi \backslash \Phi_k} \left|Z\right|^{-\alpha}}{\sum_{X \in \Phi_k }\left| X\right| ^{-\alpha}}}. \label{eq:R_3}
\end{align}

Comparing the results given in \eqref{eq:R_1}, \eqref{eq:R_2} and \eqref{eq:R_3} gives the relation in the lemma statement.

\subsection{ Proof of Lemma~\ref{Lem:Mod:SIR}} \label{app:Mod:SIR}

\begin{itemize}
\item[1)] \emph{Proof for $\SIR'$}: The distribution function of $\SIR'$ can be obtained as a function of shot-noise ratio without fading, as shown in the lemma below,  so as to leverage relevant results. 

\begin{lemma}\label{lemma: Effect of interference fading}\emph{The distribution function of $\SIR'$ can be written in terms of shot-noise ratio without fading as follows: 
\begin{align}
\Pr\!\left(\SIR'\!>\!x\right)\!=\! \Pr\!\left(\!\!R\l(a_k \lambda, (1-a_k)\lambda\r)\! >\! x \frac{{\Gamma\!\!\left(1+\frac{2}{\alpha}\right)}^{\frac{\alpha}{2}}}{h_k}\!\right)\!.
\end{align}
}
\end{lemma}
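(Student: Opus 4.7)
\subsection*{Proof plan for Lemma~\ref{lemma: Effect of interference fading}}

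The plan is to eliminate the fading coefficients $\{h_z\}$ from the denominator of $\SIR'$ by replacing the marked shot noise process $\sum_{z\in \Phi\setminus \Phi_k} h_z |Z|^{-\alpha}$ with a suitably scaled \emph{unmarked} shot noise having the same one-dimensional distribution. The key tool is the identity, obtainable from Campbell's theorem, that for a homogeneous PPP $\Phi$ of density $\mu$ and i.i.d.\ nonnegative marks $\{h_z\}$ independent of $\Phi$,
\begin{equation}
\E\!\left[e^{-s\sum_{z\in \Phi}h_z|Z|^{-\alpha}}\right]=\exp\!\left(-\mu\,\pi\,\Gamma\!\left(1-\tfrac{2}{\alpha}\right)\E[h^{2/\alpha}]\,s^{2/\alpha}\right).
\end{equation}
Comparing this Laplace transform with that of a plain shot noise $S(\Phi)$ on the same PPP (see \eqref{Eq: CF single shot noise} evaluated at $t=\jmath s$), one reads off the distributional equality
\begin{equation}
\sum_{z\in\Phi}h_z|Z|^{-\alpha}\;\sim\;\bigl(\E[h^{2/\alpha}]\bigr)^{\alpha/2}\,S(\Phi).
\end{equation}

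Next I would specialize to Rayleigh fading, where $h$ is unit-mean exponential and $\E[h^{2/\alpha}]=\Gamma(1+2/\alpha)$, and apply the identity to $\Phi\setminus\Phi_k$, which is a homogeneous PPP of density $(1-a_k)\lambda$ independent of $\Phi_k$. This yields
\begin{equation}
\sum_{z\in \Phi\setminus\Phi_k} h_z|Z|^{-\alpha}\;\sim\;\Gamma\!\left(1+\tfrac{2}{\alpha}\right)^{\alpha/2} S(\widetilde{\Phi}_{-k})
\end{equation}
where $\widetilde{\Phi}_{-k}$ is a homogeneous PPP of density $(1-a_k)\lambda$ independent of $\Phi_k$ and of $h_k$. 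Since $h_k$ is independent of both $\Phi_k$ and $\Phi\setminus\Phi_k$ and of all other marks, the numerator $h_k\,S(\Phi_k)$ and the (rewritten) denominator are independent exactly as in the original $\SIR'$, so the above distributional equality lifts to an equality between events involving the ratio.

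Finally, I would divide through: conditioning on $h_k$ (or simply rearranging within the probability),
\begin{align}
\Pr(\SIR'>x)&=\Pr\!\left(\frac{h_k\,S(\Phi_k)}{\Gamma(1+2/\alpha)^{\alpha/2}S(\widetilde{\Phi}_{-k})}>x\right)\nn\\
&=\Pr\!\left(R(a_k\lambda,(1-a_k)\lambda)>\frac{x\,\Gamma(1+2/\alpha)^{\alpha/2}}{h_k}\right),
\end{align}
which is the claimed identity. The only delicate point is to ensure that the distributional replacement of the marked denominator preserves the joint structure needed for the ratio; this is what I expect to require the most care in the write-up, and it is handled by invoking the independence of the sub-PPPs $\Phi_k$ and $\Phi\setminus\Phi_k$ (independent thinnings of $\Phi$ by content marks) together with the independence of the fading from the point process, so the numerator and denominator are independent both before and after the substitution.
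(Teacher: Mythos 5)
Your proof is correct, and it reaches the lemma by a different packaging than the paper's. The paper conditions on $\Phi_k$, uses the exponential law of $h_k$ to turn both $\Pr(\SIR'>x)$ and $\Pr\big(R(a_k\lambda,(1-a_k)\lambda)>x/h_k\big)$ into Laplace transforms of the respective (faded vs.\ unfaded) denominators, and evaluates both via Campbell's theorem, obtaining two expectations over $\Phi_k$ of the form $\E_{\Phi_k}\big[\exp\big(-c\,\pi\lambda(1-a_k)\,x^{2/\alpha}\,(\sum_{X\in\Phi_k}|X|^{-\alpha})^{-2/\alpha}\big)\big]$ whose constants $c$ differ exactly by the factor $\E[h^{2/\alpha}]=\Gamma(1+\tfrac{2}{\alpha})$; matching them amounts to the rescaling $x\mapsto x\,\Gamma(1+\tfrac{2}{\alpha})^{\alpha/2}$. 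You instead isolate the same Campbell computation as a stand-alone distributional identity (the faded interference over a PPP of density $(1-a_k)\lambda$ equals $\Gamma(1+\tfrac{2}{\alpha})^{\alpha/2}$ times an unfaded shot noise of the same density, in distribution) and substitute it into the denominator, which is legitimate precisely because numerator and denominator are independent---the delicate point you correctly flag and handle. Your route never uses the exponentiality of $h_k$ (only the fractional moment of the interference fading) and makes the origin and direction of the factor $\Gamma(1+\tfrac{2}{\alpha})^{\alpha/2}$ transparent; the paper's route produces conditional expressions that plug directly into the form of Proposition~\ref{pro: CCDF ratio of shot noise}. As an incidental payoff, your computation assigns the constants correctly: in the paper's displayed pair of expectations the factors $\Gamma(1-\tfrac{2}{\alpha})$ and $\tfrac{2\pi}{\alpha}\csc\tfrac{2\pi}{\alpha}$ appear interchanged (the faded case should carry the larger constant), a harmless typo since the lemma statement and its downstream use (e.g.\ $\eta_k$ in Lemma~\ref{Lem:Mod:SIR}) match the correct assignment, which is the one your argument yields.
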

\begin{proof}
According to the definitions of $\SIR'$ and $R(\lambda_1, \lambda_2)$, leveraging exponential distribution $h_k$, the distribution functions in the lemma statement can be derived using Campbell's Theorem as follows: 
\begin{align}
&\Pr\left(\SIR'>x \right)\nn\\
&=\E_{\Phi_k}\left[\exp\left(-\frac{\pi\lambda(1-a_k)x^{\frac{2}{\alpha}}}{{\left({\sum_{X \in \Phi_k }\left| X\right| ^{-\alpha}}\right)}^{\frac{2}{\alpha}}}\Gamma\l(1-\frac{2}{\alpha}\r)\right)\r] \nn\\
&\Pr\left(R(a_k \lambda, (1-a_k)\lambda) >\frac{x}{h_k}\right)\nn\\
&=\E_{\Phi_k}\left[\exp\left(-\frac{\pi\lambda(1-a_k)x^{\frac{2}{\alpha}}}{{\left({\sum_{X \in \Phi_k }\left| X\right| ^{-\alpha}}\right)}^{\frac{2}{\alpha}}}\frac{2\pi}{\alpha}\csc\frac{2\pi}{\alpha}\right)\r]. \nn
\end{align}
Comparing the right-hand sides of the above two equations reveals that they differ only in the scaling factor of $x$, which yields the desired result. 
\end{proof}

Combining the results in Lemma~\ref{lemma: Effect of interference fading} and  Proposition~\ref{pro: CCDF ratio of shot noise}, we can obtain the desired result.

\item[2)] \emph{Proof for $\SIR''$}:  {By using the Laplace function of $R(\lambda_1,\lambda_2)$ given in Proposition~\ref{pro: LT ratio of shot noise process}, 
\begin{align*}
&\Pr\l(\SIR''>x\r)\nn\\
&=\E\l[\mathcal{L}_{R\l(1-a_k,a_k\r)}\l(x h_k'\r) \r]\\
&=\sum_{m=1}^{\infty}\frac{(-1)^{m+1}}{\Gamma(1-m\frac{2}{\alpha})}  \left(\frac{a_k}{1-a_k}x^{-\frac{2}{\alpha}}\right)^{m}\E \left[{h_k'}^{-\frac{2}{\alpha}m}\right] \\
&\overset{\left( a \right)}{=}\sum_{m=1}^{\infty}\frac{(-1)^{m+1}}{\Gamma(1-m\frac{2}{\alpha})}\left(\frac{a_k}{1-a_k}x^{-\frac{2}{\alpha}}\right)^{m}\Gamma\left(1-m\frac{2}{\alpha}\right)\nn\\
& \overset{\left( b \right)}{=}\frac{1}{1+x^{\frac{2}{\alpha}} \left(\frac{1}{a_k}-1\right)} 
\end{align*}
where $(a)$ follows from $\Gamma(z)=\int_{0}^{\infty} x^{z-1}e^{-x} dx$ for all complex numbers $z$ except the non-positive integers and $(b)$ holds since the summation is a geometric series with $\frac{a_k}{1-a_k}x^{-\frac{2}{\alpha}}<1$.  Similar derivation holds for $\frac{a_k}{1-a_k}x^{-\frac{2}{\alpha}}>1$ since $\Pr\l(\SIR''>\theta_k\r)=1-\E\l[\mathcal{L}_{R\l(a_k,1-a_k\r)}\l(\frac{ h_k}{x}\r)\r]$, yielding   the same result as for the other case. This   completes   the proof. }

\end{itemize}

\subsection{Proof of Corollary~\ref{Co: special case:2}} \label{app: special case:2}
\begin{itemize}
\item[1)] \emph{Upper Bound}: Based on the inequality in Lemma~\ref{lemma: Relationships of different forms shot-noise ratio} and given $\alpha = 4$, replacing $\SIR$ with $\SIR''$ yields 
\begin{align*}
{P}_{d}(\mathcal{F}_k)&\leq\Pr\left(\frac{h_k }{h_k' \left(\frac{1-a_k}{a_k}\right)^{2}}\bar{R}>\theta_k\right)\nn\\ 
&=\E\Big[e^{-h_k'\theta_k   \left(\frac{1-a_k}{a_k}\right)^2{\bar{R}}^{-1}}\Big]\nn\\
&=\E\Bigg[\frac{1}{1+\theta_k \left(\frac{1-a_k}{a_k}\right)^{2}{\bar{R}}^{-1}}\Bigg]. 
\end{align*}
For ease of notation,  define  $v=\theta_k \left(\frac{1-a_k}{a_k}\right)^{2}$. Since the expectation in the last expression can be derived using  the CCDF of ${\bar{R}}^{-1}$, applying the result in Proposition~\ref{pro: CCDF ratio of shot noise} gives 
\begin{align*}
{P}_{d}(\mathcal{F}_k)\!\!&\leq 1\!\!-\!\!\int_{0}^{\infty}\!\!\! \frac{v}{\left(1+v x\right)^2}\left[\frac{1}{2}\!\!+\!\!\frac{2}{\pi}\arctan\!\!\left(\frac{1- \sqrt{x}}{1+\sqrt{x}}\right)\!\!\right] \!\! dx\\
&=1-\int_{0}^{\infty}\frac{v}{\left(1+v x\right)^2}\left(1-\frac{2}{\pi}\arctan\sqrt{x}\right) \!\!dx\\
&\overset{\left( a \right)}{=}\frac{4}{\pi v}\int_{0}^{\infty}\frac{y}{\left(\frac{1}{v}+ y^2\right)^2} \arctan{y}dy\nn\\
&\overset{\left( b \right)}{=}\frac{1}{1+\sqrt{v}}=\frac{1}{1+\sqrt{\theta_k} \(\frac{1}{a_k}-1\r)}
\end{align*}
where $(a)$ is by substituting $\sqrt{x}\rightarrow y$ and $(b)$ is obtained using  the the following formula in  \cite[BI (252)(12)a]{jeffrey2007table}
\begin{align*}
\int_0^{\infty}\frac{x\arctan qx}{\left(p^2+x^2\right)^2}dx=\frac{\pi q}{4p(1+pq)}.
\end{align*}

\item[2)] \emph{Lower Bound A}: For ease of notation, define  $\zeta_k = \frac{\pi\theta_k}{4}\left(\frac{1-a_k}{a_k}\right)^2 $. Again, based on the inequality in Lemma~\ref{lemma: Relationships of different forms shot-noise ratio} and given $\alpha = 4$, replacing $\SIR$ with $\SIR'$ leads to 
\begin{align*}
{P}_{d}(\mathcal{F}_k)&\geq1-\frac{2}{\pi}\E_{h_k}\left[\arctan\left(\sqrt{\frac{\zeta_k}{h_k}}\right)\right]\nn\\
&= 1-\frac{2}{\pi}\int_{0}^{\infty}\arctan\left(\sqrt{\zeta_k} x^{-\frac{1}{2}}\right)e^{-x}dx\nn\\ 
&\overset{\left( a \right)}{=}\!\! 1-\frac{2}{\pi}\left( \!\!-\frac{\sqrt{\zeta_k}}{2}\int_{0}^{\infty}\!\!\frac{x^{-\frac{3}{2}}}{1+\zeta_k x^{-1}}e^{-x}dx+\frac{\pi}{2}\!\right)\nn\\
&=\frac{\sqrt{\zeta_k}}{\pi}\int_{0}^{\infty}\frac{x^{-\frac{1}{2}}}{x+\zeta_k}e^{-x}dx
\end{align*}
where $(a)$ applies integration by parts.  Using the following formula in  \cite[EH II 137(3)]{jeffrey2007table} yields the desired result from last expression
\[
\int_0^{\infty}\frac{x^{v-1}e^{-\mu x}}{x+\beta}dx=\beta^{v-1}e^{\beta \mu}\Gamma\left(v\right)\Gamma\left(1-v,\beta\mu\right).
\]

\item[3)] \emph{Lower Bound B}: Using the fact that $\arctan$ is a concave function for $x>0$ and applying Jensen's inequality, the delivery  probability in \eqref{eq: Coverage special case} is bounded as
\begin{align}
{P}_{d}(\mathcal{F}_k)& \geq 1-\frac{2}{\pi}\arctan\left(\E\left[\frac{\sqrt{\theta_k}\sum_{n \neq k}a_n\sqrt{h_n}}{a_k\sqrt{h_k}}\right]\right) \nn \\
&\overset{\left( a \right)}{=}1-\frac{2}{\pi}\arctan\left(\frac{\pi\sqrt{\theta_k}}{2}\frac{1-a_k}{a_k}\right)
\end{align}
where $(a)$ is obtained by using  the fact that $h_k$ and $h_n$ are independent and leveraging the following equalities: 
\[\int_{x>0} x^{1/2} e^{-x}dx=\sqrt{\pi}/2,\quad  \int_{x>0} x^{-1/2} e^{-x}dx=\sqrt{\pi}. 
\]

\end{itemize}

\begin{IEEEbiography}
[{\includegraphics[width=1in,clip,keepaspectratio]{./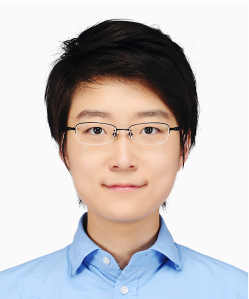}}]
{Dongzhu Liu}(S'17) received the B.Eng. from the University of Electronic Science and Technology of China (UESTC) in 2015. Since Sep. 2015, she has been a PhD student in the Dept. of Electrical and Electronic Engineering (EEE) at The University of Hong Kong. Her research interests focus on the analysis and design of wireless networks using stochastic geometry.
\end{IEEEbiography}

\begin{IEEEbiography}
[{\includegraphics[width=1in,clip,keepaspectratio]{./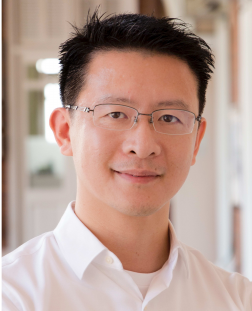}}]
{Kaibin Huang} (M'08-SM'13) received the B.Eng. (first-class hons.) and the M.Eng. from the National University of Singapore, respectively, and the Ph.D. degree from The University of Texas at Austin (UT Austin), all in electrical engineering. Since Jan. 2014, he has been an assistant professor in the Dept. of Electrical and Electronic Engineering (EEE) at The University of Hong Kong. He used to be a faculty member in the Dept. of EEE at Yonsei University in S. Korea and currently with the department as an adjunct professor. His research interests focus on the analysis and design of wireless networks using stochastic geometry and multi-antenna techniques. 

He frequently serves on the technical program committees of major IEEE conferences in wireless communications. He has been the technical chair/co-chair for the IEEE CTW 2013, the Wireless Communications Symposium of IEEE GLOBECOM 2017, the Comm. Theory Symp. of IEEE GLOBECOM 2014, and the Adv. Topics in Wireless Comm. Symp. of IEEE/CIC ICCC 2014 and has been the track chair/co-chair for IEEE PIMRC 2015, IEE VTC Spring 2013, Asilomar 2011 and IEEE WCNC 2011. Currently, he is an editor for IEEE Journal on Selected Areas in Communications (JSAC) series on Green Communications and Networking, IEEE Transactions on Wireless Communications, IEEE Wireless Communications Letters. He was also a guest editor for the JSAC special issues on communications powered by energy harvesting and an editor for IEEE/KICS Journal of Communication and Networks (2009-2015). He is an elected member of the SPCOM Technical Committee of the IEEE Signal Processing Society. Dr. Huang received the 2015 IEEE ComSoc Asia Pacific Outstanding Paper Award, Outstanding Teaching Award from Yonsei, Motorola Partnerships in Research Grant, the University Continuing Fellowship from UT Austin, and a Best Paper Award from IEEE GLOBECOM 2006.
\end{IEEEbiography}

\end{document}